\newcommand{\commento}[1]{}
\newcommand{\myit}[1]{\texttt{\textit{~#1}}}
\newcommand{\fjil}{\text{FJ\&}\lambda}
\newcommand{\A}{{\sf{A}}}
\newcommand{\B}{{\sf{B}}}
\newcommand{\C}{{\sf{C}}}
\newcommand{\D}{{\sf{D}}}
\newcommand{\Hi}{{\sf{J}}}
\newcommand{\I}{{\sf{I}}}
\newcommand{\Ei}{{\sf{E}}}
\newcommand{\T}{{\sf{T}}}
\newcommand{\U}{{\sf{U}}}
\newcommand{\f}{{\sf{f}}}
\newcommand{\g}{{\sf{g}}}
\newcommand{\m}{{\sf{m}}}
\newcommand{\te}{{\sf{t}}}
\newcommand{\x}{{\sf{x}}}
\newcommand{\y}{{\sf{y}}}
\newcommand{\CD}{{\sf{CD}}}
\newcommand{\ID}{{\sf{ID}}}
\newcommand{\K}{{\sf{K}}}
\newcommand{\M}{{\sf{M}}}
\newcommand{\N}{{\sf{N}}}
\newcommand{\Si}{{\sf{H}}}
\newcommand{\CT}{{\it{CT}}}
\newcommand{\va}{{\sf{v}}}
\newcommand{\vu}{{\sf{u}}}
\newcommand{\w}{{\sf{w}}}
\newcommand{\p}{{\sf{p}}}
\newcommand{\iInt}{\iota}
\newcommand{\Obj}{{\sf{Object}}}
\newcommand{\st}{<:}
\newcommand{\Int}{\tau}
\newcommand{\IntP}{\sigma}
\newcommand{\fInt}{\varphi}
\newcommand{\red}{\longrightarrow}
\newcommand{\substop}[1]{[#1]} 
\newcommand{\multisubst}[1]{\substop{#1}}
\newcommand{\tty}[2]{(#1)^{?#2}}
\newcommand{\ttyy}[2]{(#1)^{#2}}
\newcommand{\tInf}[2]{\texttt{tInf}(#1;#2)}
\newcommand{\der}[3]{#1\vdash#2:#3}
\newcommand{\tCk}[3]{\texttt{tCk}(#1;#2;#3)}
\newcommand{\derS}[3]{#1\vdash^*#2:#3}
\newcommand{\fields}[1]{\texttt{fields}(#1)}
\newcommand{\new}[2]{{\sf{new}}\,#1(\overrightarrow{#2})}
\newcommand{\newp}[2]{{\sf{new}}\,#1(#2)}
\newcommand{\met}[3]{#1.#2(\overrightarrow{#3})}
\newcommand{\cast}[2]{(#1)\,#2}
\newcommand{\clD}[3]{{\sf{class}}\,#1\,{\sf{extends}}\,#2\,{\sf{implements}}\,\overrightarrow#3\,\{\overline {\T}\, \overline {\f} ; \K\,\overline {\M}\}}
\newcommand{\inD}[3]{{\sf{interface}}\,#1\,{\sf{extends}}\,\overrightarrow#2\,\set{\overline#3;\!}}
\newcommand{\super}{{\sf{super}}}
\newcommand{\ret}{{\sf{return}}~}
\newcommand{\this}{{\sf{this}}}
\newcommand{\mtype}[2]{\texttt{mtype}(#1;#2)}
\newcommand{\mbody}[2]{\texttt{mbody}(#1;#2)}
\newcommand{\lambdaU}[2]{\overrightarrow{#1}\to{#2}}
\newcommand{\lambdaT}[3]{\overrightarrow {#2}\overrightarrow{#1}\to{#3}}
\newcommand{\signN}{\texttt{mh}}
\newcommand{\sign}[1]{\signN(#1)}
\newcommand{\sm}[4]{#1#2(\overrightarrow{#3}\overrightarrow {#4})}
\newcommand{\inDD}[4]{{\sf{interface}}\,#1\,{\sf{extends}}\,\overrightarrow#2\,\{\overline#3;  \overline#4\}}
\newcommand{\Asign}[1]{\texttt{A-mh}(#1)}
\newcommand{\Dsign}[1]{\texttt{D-mh}(#1)}
\newcommand{\Amtype}[2]{\texttt{A-mtype}(#1;#2)}
\newcommand{\Dmtype}[2]{\texttt{D-mtype}(#1;#2)}
\newcommand{\cond}[3]{#1?\,#2\!:\!#3}
\newcommand{\join}[2]{\texttt{lub}(#1,#2)}
\newcommand{\Bool}{\sf{boolean}}
\newcommand{\set}[1]{\{#1\}}
\newcommand{\OK}{{\sf{OK}}}
\newcommand{\true}{{\sf{true}}}
\newcommand{\false}{{\sf{false}}}
\newcommand{\dfn}{def.}
\newcommand{\E}{{\mathcal{E}}}
\newcommand{\IntR}{\rho}
\newcommand{\myrule}[2]{\text{\prooftree
#1
\justifies
#2
\endprooftree}}
\newcommand{\rn}[1]{\small{#1}}
\newcommand{\myruleN}[3]{\text{\prooftree
#1
\justifies
#2
\using \text{\rn{[#3]}}
\endprooftree}}
\begin{document}
\title[Java $\&$ Lambda: a Featherweight Story]{Java $\&$ Lambda: a Featherweight Story}
\author[]{Lorenzo Bettini\rsuper{a}}
\address{\lsuper{a}Dipartimento di Statistica, Informatica, Applicazioni, Universit\`a di
Firenze, Italy}
\email{lorenzo.bettini@unifi.it}
\address{\vskip-7pt}
\email{betti.venneri@unifi.it}

\author[]{Viviana Bono\rsuper{b}}
\address{\lsuper{b}Dipartimento di Informatica, Universit\`a di Torino, Italy}
\email{bono@di.unito.it}

\author[]{Mariangiola Dezani-Ciancaglini\rsuper{b}}
\address{\vskip-7pt}
\email{dezani@di.unito.it}

\author[]{Paola Giannini\rsuper{c}}
\address{\lsuper{c}Dipartimento di Scienze e Innovazione Tecnologica, Universit\`{a} del Piemonte Orientale,  Italy}
\email{paola.giannini@uniupo.it}

\author[]{Betti Venneri\rsuper{a}}

\thanks{Mariangiola Dezani was partially supported by EU H2020-644235 Rephrase project, EU H2020-644298 HyVar project, IC1402 ARVI and Ateneo/CSP project RunVar. Paola Giannini has the financial support of the Universit\`a  del Piemonte Orientale.}
\dedicatory{Dedicated to Furio Honsell on the occasion of his 60th birthday.}

\begin{abstract}
We present $\fjil$, a new core calculus that extends 
{\em Featherweight Java} (FJ) with interfaces,  supporting multiple inheritance in a restricted form,
{\em $\lambda$-expressions}, and {\em intersection types}. Our main goal is to formalise how
lambdas and intersection types are grafted on Java 8, by studying their
properties in a formal setting. We show how intersection types play a
significant role in several cases, in particular in the typecast of a
$\lambda$-expression and in the typing of conditional expressions.
We also embody interface \emph{default methods} in  $\fjil$,  since they increase the
dynamism of $\lambda$-expressions, by allowing these methods to be called on
$\lambda$-expressions.

The crucial point in Java 8 and in our calculus is that $\lambda$-expressions
can have various types according to the context requirements (\emph{target types}):
indeed, Java code does not compile when $\lambda$-expressions come without target types.
In particular, in the operational semantics we must record target
types by decorating $\lambda$-expressions, otherwise they would be lost in the runtime
expressions.

We prove the subject reduction property and progress for the resulting calculus, and 
we give a type inference algorithm that returns the type of a given program if
it is well typed.
The design of $\fjil$ has been driven by the aim of making it a subset of Java 8,
while preserving the elegance and compactness of FJ.
Indeed, $\fjil$ programs are typed and behave the same as Java programs.

\end{abstract}
\maketitle
 
\section{Introduction}
Currently Java is one of the most popular programming languages. Java offers
crucial features such as platform-independence and type-safety. Moreover it
continuously evolves with new features, while maintaining backward compatibility.
Following and sometimes influencing the evolution of Java development, programming language researchers have studied new features in the context of core calculi and formal models (see Section~\ref{rw}). This paper is a further step in this direction, focusing on \emph{intersection types} and \emph{Java 8}'s \emph{$\lambda$-expressions}. 
These two notions share a long and common history. In the recent past, intersection types have played a fundamental role in the construction and in the study of $\lambda$-calculus models, see Part III of~\cite{BDS13}. This successful marriage now acquires a new lease of life in Java 8.  $\lambda$-Expressions introduce a functional programming style on top of the object-oriented basis, while the typecast of intersection types gives  $\lambda$-expressions almost multiple identities. 

The background of the authors (which is similar to that of Furio Honsell) made them curious to understand how lambdas and intersection types have been grafted on a programming language like Java, by studying their properties in a formal calculus.
The obvious choice was to start from the \emph{Featherweight Java} (FJ) calculus, which has been proposed in~\cite{IPW01} as a minimal core language for modelling the essential aspects of Java's type system that are significant for the proof of type safety.  For our purposes, we extend FJ  by adding interfaces, with multiple inheritance, $\lambda$-expressions, and intersection types. 

Java 8 represents intersection by $\&$. Intersection types are introduced in a restricted form: they  can contain at most one class, which must be the first one specified, and multiple interfaces, provided that the intersection induces an unnamed class or interface. This means that the class and the interfaces cannot have methods with the same name and different types. We formalise the correctness requirements for building intersections through a function that gives the list of method headers defined in the type, with the condition that the same method name cannot get different signatures. We show how intersection types play a significant role in several cases, in particular in the typecast of a  $\lambda$-expression  and in the typing of conditional expressions.

$\lambda$-Expressions are {\em poly expressions} in Java 8. This means they can
have various types according to the context requirements. More specifically, the
contexts must prescribe \emph{target types} for $\lambda$-expressions: indeed,
Java code does not compile when $\lambda$-expressions come without target types.
Instead, standard expressions have unique types, which are determined entirely from their structure.  This combination of two different notions of typing requires bidirectional checking~\cite{PT00,DP00} and it has been the most critical issue in designing the type system of our calculus.
We point out that  Java avoids the introduction of $\lambda$-calculus function
types for $\lambda$-expressions, which would open the gates to structural
subtyping.  A target type can be either a functional interface (i.e., an
interface with a single abstract method)  or an intersection of interfaces that
induces a functional interface. According to this approach,  our definition of the subtype relation is  based on type names, with the addition of structural subtyping rules only on intersections (see Section~\ref{syntax}).

Concerning operational semantics,  we  must take into account that the reductions modify the contexts. Therefore, target types would be lost unless we record them. In order to have the subject reduction property, we decided to decorate $\lambda$-expressions by their target types: decorated terms appear at runtime only.

We also embody \emph{default methods} in interfaces, since they increase the dynamism of $\lambda$-expressions by allowing these methods to be called on $\lambda$-expressions. We 
discuss two aspects of conditional expressions. When both branches can be typed independently from the context, then Java uses intersection to build the type of the conditional as a least upper bound of branch types. Instead, in the presence of branches that are $\lambda$-expressions, these $\lambda$-expressions must have the target type that is prescribed by the context. 

Finally, we prove subject reduction and progress for the resulting calculus, dubbed $\fjil$ (Featherweight Java with intersection types and $\lambda$-expressions). 
We also give an inference algorithm that applied to a program, i.e., a class table and a term, returns (if any) the  type of the term. This algorithm takes into account the declarations in the table, which also induces the partial order between types. 

The design of $\fjil$ has been driven by the aim of making it a subset of Java
8, while preserving the elegance and compactness of FJ. Indeed, $\fjil$ programs are typed and behave the same as Java programs. Thus, our main result is to show how several significant novelties  are interwoven in Java 8 in a \emph{type-safe} way.

\paragraph{Outline.} We present $\fjil$ in three steps. The main part of the paper concentrates on  $\fjil$ without default methods in interfaces and conditional expressions. This part has a classical structure: syntax (Section~\ref{syntax}), lookup functions  (Section~\ref{lf}), operational semantics (Section~\ref{sem}), typing rules (Section~\ref{tr}) and properties (Section~\ref{aux}). The extensions to default methods in interfaces and conditional expressions are shown in Section~\ref{dm} and~\ref{cond}, respectively. Section~\ref{inf} details a type inference algorithm for the whole $\fjil$. Related works are discussed in Section~\ref{rw} and Section~\ref{fw} concludes with some hints to future research.

 \section{Syntax}\label{syntax}

We use $ \A, \B, \C, \D$ to denote  classes, 
$\I,\Hi$ to denote interfaces, 
$\T,  \U$ 
to denote nominal pre-types, i.e.,  either classes or interfaces; $\f,\g$ to denote field names; $\m$ to denote method names; $\te$ to denote terms; $\x,\y$ to denote variables, including the special variable $\this$.  We use $\overrightarrow{\I}$ as a shorthand for the list $\I_1,\ldots, \I_n$, $\overline{\M}$ as  a shorthand for the sequence $\M_1\ldots\M_n$, and similarly for the other names. The order in lists and sequences is sometimes unimportant, and this is clear from the context. In rules, we write both $\overline{\N}$ as a declaration and $\overrightarrow{\N}$  for some name $\N$: the meaning is that a sequence is declared and the list is obtained from the sequence adding commas. 
The notation $\overline{\T}\, \overline{\f};$  abbreviates $\T_1\f_1;\ldots\T_n\f_n;$ and $\overrightarrow{\T} \overrightarrow{\f}$  abbreviates $\T_1\f_1,\ldots,\T_n\f_n$ (likewise $\overrightarrow{\T} \overrightarrow{\x}$) and $\this.\overline{\f} = \overline{\f};$ abbreviates $\this.\f_1 = \f_1; \ldots \this.\f_n = \f_n;$. Sequences of interfaces, fields, parameters and methods are assumed to contain no duplicate names. The keyword $\super$, used only in constructor's body, refers to the superclass constructor.

\begin{figure}[b]
$\begin{array}{rclll}
\CD & ::= &\clD \C \D \I & \text{class declarations}\\\\
\ID & ::= & \inD \I \I \Si&\text{interface declarations}\\\\
\K  & ::= &\C(\overrightarrow{\T} \overrightarrow{\f})\{\super(\overrightarrow{\f}); \this.\overline{\f} = \overline{\f};\!\}  & \text{constructor declarations}
\\\\
\Si & ::= & \T  \m(\overrightarrow{\T}  \overrightarrow{\x}) & \text{header  declarations} \\\\
\M & ::= &  \Si\,\{\ret  \te;\!\} & \text{method declarations}
\end{array}$\caption{Declarations}\label{cd}
\end{figure}

Figure~\ref{cd} gives declarations: $\CD$ ranges over class declarations; $\ID$ ranges over interface declarations; $\K$ ranges over constructor declarations; $\Si$ ranges over method header (or abstract method) declarations; $\M$ ranges over method (or concrete method) declarations. This figure is obtained from Figure 19-1 of \cite{P02} by adding interfaces and method headers. A class declaration gives (in order) the class name, the superclass, the implemented interfaces, the typed fields, the constructor and the methods. An interface declaration gives the extended interfaces and the method headers. The arguments of the constructors correspond to the immutable values of the class fields. The inherited fields are initialised by the call to $\super$, while the new fields are initialised by assignments.  Headers relate method names with result and parameter pre-types. Methods are headers with bodies, i.e., return expressions. In writing examples we omit $\sf{implements}$ and $\sf{extends}$ when the lists of interfaces are empty and we use $\epsilon$ for the empty list.

$\Obj$ is a special class without fields and methods: it does not require a declaration.
A \emph{class table} $\CT$ is a mapping from nominal types to their declarations. A \emph{program} is a pair $(\CT,\te)$. In the following we assume a fixed class table.

\emph{Pre-types} (ranged over by $\Int,\IntP$) are either a nominal type or the intersection of:
\begin{itemize}
\item interfaces or
\item a class (in the leftmost position) with interfaces.
\end{itemize}   
Using  $\iInt$ to denote either an interface or an intersection of interfaces we define:
\[\Int::= \C \mid\iInt \mid\C\&\iInt\quad\text{ where }\quad\iInt::= \I \mid \iInt\&\I \]
The notation  $\C[\&\iInt]$ means either the class $\C$  
or the pre-type $\C\&\iInt$. 

\begin{figure}[tb]
	$\begin{array}{c}
	 \sign\Obj=\epsilon  \qquad\myrule{\CT(\I) = \inD \I \I \Si }{\sign  \I  = \overrightarrow{\Si} \uplus \sign{\overrightarrow{\I}} }\\\\
	 	\myrule{ 
		\myit{\CT}(\C) = \clD \C \D \I \quad
		\overline{\M}=\overline{\Si\, \{ \ret  \te;\! \}} 
		}{\sign{\C} = \overrightarrow{\Si} \uplus \sign \D\uplus \sign{\overrightarrow{\I}} } \\\\ 
\sign{\I_1,\ldots, \I_n} = \biguplus_{1\leq j\leq n} \sign{\I_j} \qquad		 \sign{\T_1\&\ldots \& \T_n} = \biguplus_{1\leq i\leq n} \sign{\T_i} 
	\end{array}$
	\caption{Function $\signN$}\label{sign}
\end{figure}

\begin{figure}[tb]
$\begin{array}{c}
\sf{class\,\C\,extends\,\Obj\,\set{\C(\,)\,\set{\super(\,);}\,\C\,m(\I\,\x)\set{\ret \x.n(\,);}}}\\
\sf{interface\,\I\,\set{ \C\, n(\,);}}   \qquad \qquad \qquad \sf{interface\,\Hi\,\set{ \C\, m(\,);}}\qquad \qquad \qquad \sf{interface\,\Ei\,\set{\,}}
\end{array}$
\caption{A Simple Class Table}\label{sct}
\end{figure}

To define types we use the partial function $\signN$ that maps pre-types to lists of method headers, considered as sets, see Figure~\ref{sign}. We need also to define $\signN$ for lists of interfaces. By $\biguplus$ we mean the union of lists of method headers that is defined only if no method name occurs in different headers. For instance taking $\C$, $\I$ and $\Hi$ as in Figure~\ref{sct} we get $\sf{\sign{\C\&\I}=\sign\C\biguplus\sign\I=\set{\C\,m(\I\,\x),\C\, n(\,)}}$, while $\sign{\C\&\Hi}=\sign\C\biguplus\sign\Hi$ is undefined, since both 
$\sign\C$ and $\sign\Hi$ contain method $\m$ with different argument lists.

\begin{defi}[Types]\label{types}
A pre-type $\Int$ is a type if $\sign\Int$ is defined.
\end{defi}

\noindent
In the following we will always restrict $\T,\U,\Int,\IntP$ to range over types. The typing rules for classes and interfaces (see Figure~\ref{icdtr}) assure that all nominal pre-types in a well-formed class table are types.

In the treatment of $\lambda$-expressions a special kind of types is handy. A \emph{functional type} is an interface or an intersection of interfaces which is mapped by  $\signN$ to a singleton, i.e.,  exactly to one method header. In other words, the type has only a single abstract method. We use $\fInt$ to range over functional types.

For example, with respect to the class table of Figure~\ref{sct} the pre-type $\C\&\I$ is a type, while $\C\&\Hi$ is not a type. Moreover, the type $\I\&\Ei$ is a functional type, while the type $\I\&\Hi$ is not a functional type.

\begin{figure}[tb]
$\begin{array}{ccc}
\begin{array}{llr}
\te ::= & & \text{terms} \\
& \va & \text{value}\\
& \x & \text{variable}\\
& \te.\f & \text{field access} \\
& \met \te \m \te & \text{met.  invoc.}\\
& \new  \C \te& \text{object}\\
& \cast\Int \te & \text{cast}
\end{array}
&\qquad&
\begin{array}{llr}
\va ::= & & \text{values}\\
& \w & \text{proper value}\\
& \lambdaU \p \te & \text{pure $\lambda$-expression}\\
\w ::= & & \text{proper values}\\
& \new \C \va & \text{object}\\
& \ttyy {\lambdaU \p \te} \fInt& \text{decorated $\lambda$-expression}\\
\p ::= & & \text{parameters}\\
& \x & \text{untyped}\\
&  \T\, \x & \text{typed}
\end{array}
\end{array}$\caption{Terms}\label{terms}
\end{figure}

Terms are defined in Figure~\ref{terms}: the differences with Figure 19-1 of \cite{P02} are the casting to intersections and the addition of $\lambda$-expressions. Inside the set of values (ranged over by $\va,\vu$) we distinguish \emph{proper values} (ranged over by $\w$): a pure $\lambda$-expression is a value, while a $\lambda$-expression decorated by a functional type is a proper value.  The functional type represents the \emph{target type} ~\cite{java8} (page 93) of the pure $\lambda$-expression: these proper values can occur only at run time. A parameter $\p$ of a $\lambda$-expression can be either untyped or typed, but the typing rules forbid to mix typed and untyped parameters in the same $\lambda$-expression. 
We use $\te_\lambda$ to range over pure $\lambda$-expressions. 


\begin{figure}[tb]
$\begin{array}{c}
\myruleN{\CT(\C) = \clD \C \D \I }{\C \st  \D \quad \C \st  \I_j  \quad \forall ~ \I_j \in \overrightarrow{\I}}{$\st\C$}\\\\
\myruleN{\CT(\I) = \inD \I \I  \Si }{\I \st  \I_j  \quad \forall ~ \I_j \in \overrightarrow{\I}}{$\st\I$}\qquad 
\T\st \Obj\,[\st\Obj]\\\\\myruleN{\Int \st  \T_i \quad \text{for all }1\leq i\leq n}{\Int \st  \T_1\&\ldots\&\T_n}{$\st\&$R} \qquad
\myruleN{\T_i \st  \Int \quad \text{for some }1\leq i\leq n}{\T_1\&\ldots\&\T_n\st \Int}{$\st\&$L} 
\end{array} $\caption{Subtyping}\label{subt}
\end{figure}

The \emph{subtype relation} $\st$ takes into account both the hierarchy between nominal types induced by the class table and the set theoretic properties of intersection. In fact $\st $ is the reflexive and transitive closure of the relation induced by the rules in Figure~\ref{subt}. Rule [$\st\&$R] formalises the statement in the last two lines of page 677 in~\cite{java8}.

Notice that  the requirement ``$\sign\Int$ defined'' (see Definition~\ref{types}) allows us to build a nominal class that is a subtype of $\Int$, as prescribed by the Java 8 Language Specification~\cite{java8} (pages 70-71). Dually the existence of a nominal class that is a subtype of $\Int$ assures $\sign\Int$ defined since rule [$\C$  \myit{OK}] in Figure~\ref{icdtr} requires $\sign\C$ defined and $\sign\Int\subseteq\sign\C$, see the proof of Lemma~\ref{ht}(\ref{ht2}). 

It is easy to notice that $\iInt\st\Obj\&\iInt\st\iInt$ for all $\iInt$, but we do not consider these types as equivalent, since $\iInt$ can be a functional type while $\Obj\&\iInt$ cannot. Moreover, 
in the presence of generic types, the type erasure of $\Obj\&\iInt$ differs from
the type erasure of $\iInt$. Our choice agrees with the aim of  designing $\fjil$ as a subset of Java.

\section{Lookup functions}\label{lf}

Following the definition of FJ (Figure 19-2 of \cite{P02}) the evaluation and typing rules of $\fjil$ use partial functions which give the set of fields of a class and the body of a method in a class. A difference is that the function which returns the type of a method takes as second argument a type instead of a class. This function takes advantage of 
the function  $\signN$, defined in Figure~\ref{sign}. Figure~\ref{aux} lists the lookup functions.

\begin{figure}[h]
	$\begin{array}{c}
	 \fields\Obj = \epsilon  \qquad
	 \myrule{ \begin{array}[b]{c}
		\CT(\C) = \clD \C \D \I \\
		\fields \D = \overrightarrow{\U}  \overrightarrow{\g}
		\end{array}
	}{\fields \C = \overrightarrow{\U}  \overrightarrow{\g},  \overrightarrow{\T}  \overrightarrow{\f}}  \\ \\	
	\myrule{
		\sm \T \m \T  \x  \in \sign\Int
		}{ \mtype \m \Int = \overrightarrow{\T} \rightarrow \T }  \qquad 
	\myrule{ \begin{array}[b]{c}
		\CT(\C) = \clD \C \D \I\\
		\T  \m( \overrightarrow{\U} \overrightarrow{\x}) \{ \ret  \te; \} \in \overrightarrow{\M}
		\end{array}}{\mbody \m \C = (\overrightarrow{\x},\te)} \\ \\
	\myrule{
		\begin{array}[b]{c}
		\CT(\C) = \clD \C \D \I\\
		\m\text{ is not defined in }\overrightarrow{\M}
		\end{array}}{\mbody \m \C = \mbody \m \D}	   
	\end{array}$
	\caption{Lookup Fields and Methods}\label{aux}	
\end{figure} 

 \section{Operational Semantics}\label{sem}

In typing the source code, Java uses for $\lambda$-expressions the types required by the contexts enclosing them.  These types are called {\em target types}.
This means that $\lambda$-expressions are {\em poly expressions}, i.e., they can
have different  types in different contexts, see page 93 of~\cite{java8}.
More precisely:

\begin{enumerate}
	\item\label{c1} the target type of a $\lambda$-expression that occurs as an
	actual parameter of a constructor call is the type of the field in the class declaration;
	\item\label{c2} the target type of a $\lambda$-expression that occurs as an
	actual parameter of a method call is the type of the parameter in the method declaration;
	\item\label{c3} the target type of a $\lambda$-expression that occurs as a return term of a method is the result type in the method declaration;
	\item\label{c4} the target type of a $\lambda$-expression that occurs as the body
of another $\lambda$-expression is the result type of the target type of the external $\lambda$-expression;
	\item\label{c5} the target type of a $\lambda$-expression that occurs as argument of a cast is the cast type.
		\end{enumerate}

\noindent
According to~\cite{java8} (page 602): ``It is a compile-time error if a lambda
expression occurs in a program in some place other than an assignment context, an invocation context (like (\ref{c1}), (\ref{c2}), (\ref{c3}) and (\ref{c4}) above), or a casting context (like (\ref{c5}) above).''
		
Clearly, by reducing field accesses and method calls with the rules of FJ (see Figure 19-3 of \cite{P02}) we lose the information on target types and we do not know how to type the $\lambda$-expressions in the resulting terms. For this reason, we modify these rules and we add the rules for method invocation on $\lambda$-expressions in such a way the $\lambda$-expressions are decorated by their target types in the evaluated terms. 
Technically, we use the mapping $\tty{\te}{\Int}$ defined as follows:
\[\tty{\te}{\Int}=\begin{cases}
    \ttyy \te \Int  & \text{if  $\te$ is a pure $\lambda$-expression}, \\
    \te  & \text{otherwise}
\end{cases}
\] 
The typing rules assure that if $\te$ is a pure $\lambda$-expression, then
$\Int$ is a functional type, i.e., reducing well-typed terms we only get decorated terms of the shape $\ttyy{\te_\lambda}\fInt$.

As usual  $\multisubst{\x\mapsto\te}$ denotes the substitution of $\x$ by $\te$ and it generalises to an arbitrary number of variables/terms as expected.

The notation $\overrightarrow \x\mapsto  {\tty {\overrightarrow \va} {\overrightarrow{\T}}}$ is short for $\x_1\mapsto\tty{\va_1}{\T_1},\ldots,\x_n\mapsto\tty{\va_n}{\T_n}$. 

\begin{figure}[tb]
\prooftree
\fields \C  = \overrightarrow \T~\overrightarrow \f
\justifies
\new \C \va.\f_j \red \tty{\va_j}{\T_j}
     %
           \using \text{[E-ProjNew]}\qquad
\endprooftree \prooftree
\C \st  \Int
\justifies
 \cast\Int{\new \C \va} \red \new \C \va
     %
           \using \text{[E-CastNew]}
\endprooftree

\prooftree
\mbody \m \C = (\overrightarrow \x, \te)\quad\mtype \m \C =\overrightarrow{\T} \rightarrow \T 
\justifies
\met{\new \C \va} \m  \vu \red
 \multisubst{\overrightarrow \x\mapsto  {\tty {\overrightarrow \vu} {\overrightarrow{\T}}}, \this\mapsto \new
  \C \va}\tty \te {\T}
     %
           \using \text{[E-InvkNew]}
\endprooftree

\prooftree
\mtype \m \fInt =\overrightarrow{\T} \rightarrow \T
\justifies
\met{\ttyy{\overrightarrow{\y}\to \te} \fInt}\m \va  \red
 \multisubst{\overrightarrow \y\mapsto \tty{\overrightarrow \va}{\overrightarrow \T}}\tty \te \T
     %
           \using \text{[E-Invk$\lambda$U]}
\endprooftree

\prooftree
\mtype \m \fInt=\overrightarrow{\T} \rightarrow \T
\justifies
\met{\ttyy{\overrightarrow{\T}\overrightarrow{\y}\to \te} \fInt}\m  \va \red
 \multisubst{\overrightarrow \y\mapsto \tty{\overrightarrow \va}{\overrightarrow \T}}\tty \te \T
     %
           \using \text{[E-Invk$\lambda$T]}
\endprooftree

$
 \cast{\fInt}{\te_\lambda} \red \ttyy{\te_\lambda}\fInt$
  [E-Cast$\lambda$] 
\qquad
\prooftree
\fInt \st  \fInt' 
\justifies
 \cast{\fInt'}{\ttyy{\te_\lambda} {\fInt}} \red \ttyy{\te_\lambda} {\fInt}
     %
           \using \text{[E-Cast$\lambda$Target]}
\endprooftree
\caption{Computational Rules}\label{comr}
\end{figure}

\begin{figure}[tb]
\prooftree
\te \red \te'
\justifies
\te.\f \red \te'.\f
           \using \text{[E-Field]}
\endprooftree \qquad
\prooftree
 \te \red \te'
\justifies
\met \te \m \te \red \met {\te'}\m \te
           \using \text{[E-Invk-Recv]}
\endprooftree \qquad
\prooftree
\te \red \te'
\justifies
\cast \Int \te \red \cast \Int{\te'}
           \using \text{[E-Cast]}
\endprooftree 

\prooftree
 \te \red \te'
\justifies
\w.\m(\overrightarrow \va, \te, \overrightarrow \te) 
   \red \w.\m(\overrightarrow \va, \te', \overrightarrow \te)
     %
           \using \text{[E-Invk-Arg]}
\endprooftree 

\prooftree
\te \red \te'
\justifies
\sf{new}\, \C(\overrightarrow \va, \te, \overrightarrow \te) 
   \red \sf{new}\, \C(\overrightarrow \va, \te', \overrightarrow \te)
     %
           \using \text{[E-New-Arg]}
\endprooftree 
\caption{Congruence Rules}\label{conr}
\end{figure}

The reduction rules are given in Figures~\ref{comr} and~\ref{conr}. It is easy
to verify that all pure $\lambda$-expressions being actual parameters or resulting terms in the l.h.s.\ are decorated by their target types in the r.h.s. Notably, in rules [E-Invk$\lambda$U] and [E-Invk$\lambda$T] we require the  pure $\lambda$-expression to come with its target type. This will be enforced by the typing rules. Coherently, the method receiver in rule [E-Invk-Arg] must be a proper value.

For example, with respect to the class table of Figure~\ref{sct}, we get:
\[\sf{new\,\C(\,).m(\epsilon\to\,new\,\C(\,))\red\ttyy{\epsilon\to\, new\,\C(\,)}\I.n(\,)\red\,new\,\C(\,)}\]

Since our interest is mainly in the type system, our reduction rules for $\lambda$-expressions strongly simplify the evaluation mechanism described in~\cite{java8} (Section 15.27.4). Our choice of decorating  $\lambda$-expressions with their target types during execution mimics the novelty of the Java 8 strategy for translating a $\lambda$-expression to bytecode.  
The Java 8 compiler, instead of generating an anonymous inner class, and then instantiating an object, replaces the creation of this extra class and object with a bytecode ``invokedynamic''  instruction. 
This instruction is used to delay the implementation of the $\lambda$-expression body until runtime, when the $\lambda$-expression is invoked for the first time.
In a similar way,  in the operational rules of our model, when a reduction step yields a $\lambda$-expression to be used,  
we have to record the target type which identifies the $\lambda$-expression in that context. So both approaches deal with $\lambda$-expressions only at run time when they are needed. 

 \section{Typing rules}\label{tr}

$\fjil$ adds intersection types and $\lambda$-expressions to FJ. This addition
requires non-trivial extensions of the typing rules, which are the main
contribution of the present paper. We start explaining the rules for terms shown
in Figure~\ref{sdtr}. Typing judgements are of the shape $\der \Gamma  \te \Int$
, where an environment $\Gamma$ is a finite mapping from variables to nominal types. We also use $\derS\Gamma  \te \Int$ as an abbreviation to simplify typing rules, as explained below.

\begin{figure}[b]
$\begin{array}{c}
 \myruleN{\x: \T \in \Gamma}{\der\Gamma  \x  \T} {T-VAR} 
  \qquad
 \myruleN{\der\Gamma  \te {\C[\&\iInt]} \quad \fields \C = \overrightarrow{\T}  \overrightarrow{\f} }{\der\Gamma { \te.\f_j } {\T_j}}{T-FIELD} \\
 \\
 \myruleN{
	\begin{array}[c]{c}
	\der \Gamma  \te \Int \quad
	\mtype \m \Int = \overrightarrow{\T} \rightarrow \T \quad
	\derS\Gamma  {\overrightarrow{\te}}{\overrightarrow{\T}}
	\end{array}
}{\der\Gamma {\met \te \m \te} \T}{T-INVK} \\
 \\
 \myruleN{
	\begin{array}[c]{c}
	\fields \C = \overrightarrow{\T}  \overrightarrow{\f} \quad
	\derS\Gamma {\overrightarrow{\te}}{ \overrightarrow{\T} }
	\end{array}
}{\der\Gamma{ \new  \C \te}\C}{T-NEW} \\
  \\
\myruleN{\sign\fInt=\T \m(\overrightarrow{\T} \overrightarrow{\x})\quad\derS{\Gamma,\overrightarrow{\y}:\overrightarrow{\T}} \te\T}{\der\Gamma {\ttyy{\overrightarrow{\y}\to \te} \fInt}\fInt}{T-$\lambda$U}\\
 \\
	\myruleN{\sign\fInt=\T \m(\overrightarrow{\T} \overrightarrow{\x})\quad\derS{\Gamma,\overrightarrow{\y}:\overrightarrow{\T}}\te \T	}{\der\Gamma{\ttyy{\overrightarrow{\T}\overrightarrow{\y}\to \te} \fInt}\fInt}{T-$\lambda$T}
\end{array}$
\caption{Syntax Directed Typing Rules}\label{sdtr}
\end{figure}

A first observation is that in field selections and method calls the receivers
can be typed by intersection types, and this is easily taken into account in
rules [T-FIELD] and [T-INVK].  Rules [T-$\lambda$U] and [T-$\lambda$T] deal with
$\lambda$-expressions, where types of parameters are omitted or explicitly
given, respectively. As already discussed,  a $\lambda$-expression is always
typed by the target type that is prescribed by the context. For this reason,
there are no rules for typing pure $\lambda$-expressions, instead rules
[T-$\lambda$U] and [T-$\lambda$T] type-check $\lambda$-expressions decorated by
their target types.

The main  technical innovation in our type system is the introduction of the judgement  $\vdash^*$.
It  is used in the typing rules of Figure~\ref{sdtr},
in order to play a specific role according
to whether the term being typed is a pure $\lambda$-expression or not. 

In the first case, the pure $\lambda$-expression can only have the target type that is mentioned by the context. Therefore, assigning a type to a  $\lambda$-expression by $\vdash^*$ means that the expression is decorated with this type and then it has the annotated type by the rules  [T-$\lambda$U] and [T-$\lambda$T]. These rules  require that the type is a functional type $\fInt$ and the term matches the signature of $\fInt$. Namely,  if  $\sign\fInt=\sm \T \m \T \x$, then, assuming the types $\overrightarrow \T$ for the parameters,  the body of the $\lambda$-expression  must have  type $\T$  according to  $\vdash^*$. 

Otherwise, when the term  is different from a pure $\lambda$-expression, the typing rules derive for the term the unique type that is induced by its syntactic structure. In this case, the  judgement  $\vdash^*$  has the role of checking whether  this type is a subtype of the one expected in the context. Therefore, judgements $\vdash^*$ turn out to be equivalent to standard subtype assertions, following the style of FJ where explicit subsumption is replaced by algorithmic subtype statements in the typing rules.

The above discussion suggests the following typing rules:
\[\myrule{ \der\Gamma {\te}\IntP \quad\IntP\st\Int }{\derS\Gamma {\te}\Int } \qquad\qquad\qquad\myrule{ \der\Gamma {\ttyy{\te_\lambda}\fInt}\fInt }{\derS\Gamma {\te_\lambda}\fInt } \]
which, taking advantage of the notation $\tty{\,}{}$, become:\label{star}
\[\myruleN{ \der\Gamma {\tty\te\Int}\IntP \quad\IntP\st\Int }{\derS\Gamma \te\Int }{$\vdash$ $\vdash^*$}   \]
Notice that if $\te$ is a decorated $\lambda$-expression only rules
[T-$\lambda$U] and [T-$\lambda$T]  can be applied, and then $\IntP=\Int$ and $\Int$ must be a functional type. In Java, $\lambda$-expressions can contain only final (or effectively final) variables from the enclosing environment, see page 607 of~\cite{java8}. We do not need to check this since $\fjil$ does not have assignments.\footnote{Note that in Java, $\lambda$-expressions can instead access any field of the containing class, whether final or not, and this extends naturally to $\fjil$.  Dealing with final or effectively final local variables in $\fjil$, after introducing assignments, would still be rather easy, since it would require to perform only a local control flow analysis in the scope of the containing method.}

The main feature of $\vdash^*$ is to simplify many of the typing rules in Figure~\ref{sdtr}. For instance without $\vdash^*$ we should write the rule for typing constructor calls with only one parameter as follows:\[
\myrule{
	\begin{array}[c]{c}
	\fields \C = \T\f \text{ if $\te$ is a pure $\lambda$-expression then 
	$\der\Gamma {\ttyy\te\T}\T$ else  $\der\Gamma \te\Int$ with $\Int\st\T$ }
	\end{array}
}{\der\Gamma{ \newp  \C \te}\C}{}\]
 We remark that we do not have two type systems, the judgement $\vdash^*$ is only a shorthand for an alternative between two possible judgements in the system $\vdash$.

Rules [T-INVK]  and [T-NEW] for $\fjil$ differ from the homonymous rules for FJ
in the use of the judgement $\vdash^*$ for actual parameters, and field initialisers, respectively. As usual $\derS\Gamma  {\overrightarrow{\te}}{\overrightarrow{\T}}$ is short for $\derS\Gamma  {\te_1}{\T_1}, \ldots, \derS\Gamma  {\te_n}{\T_n}$. Rules [T-$\lambda$U] and [T-$\lambda$T] are new but faithful to the statements in~\cite{java8} (see Section 15.27.3): as explained above, the body of the  $\lambda$-expression is typed by means of $\vdash^*$.

\begin{figure}[tb]
$\begin{array}{c}
	 \myruleN{\der\Gamma  \te  \Int \quad \Int \st  \IntP}{\der\Gamma {\cast\IntP \te } \IntP}{T-UCAST}\qquad
	\myruleN{\begin{array}{c}
	\der\Gamma {\ttyy{\te_\lambda}\fInt} \fInt 
	\end{array}}
	{\der\Gamma{\cast\fInt {\te_\lambda}} \fInt}{T-$\lambda$UCAST} \\
	\\
	 \myruleN{\der\Gamma  \te \Int \quad  \Int \sim \C[\&\iInt]\quad  \IntP\sim\D[\&\iInt']\quad \Int\not<:\IntP\quad\text{either }\C \st  \D\text{ or } \D \st  \C}{\der\Gamma {\cast\IntP t } \IntP} {T-UDCAST} 
\end{array}$
\caption{Cast Typing Rules}\label{ctr}
\end{figure}

We complete typing rules for terms by defining rules for type casts  in Figure~\ref{ctr}. The upcast rule [T-UCAST] is the natural generalisation of the homonymous rule of FJ to intersection types. Notice that the arguments of the casts in this rule and in rule [T-UDCAST] cannot be  pure $\lambda$-expressions, since a typing judgement must be derivable for them. The cast of a $\lambda$-expression $\te_\lambda$ requires to use the functional type as target type for  $\te_\lambda$, see rule [T-$\lambda$UCAST]. For example Java allows the cast $(\I\&\Ei)((\,)\to\newp\C{\,})$ but disallows the cast $(\Obj\&\I)((\,)\to\newp\C{\,})$ where $\I$, $\Ei$ and $\C$ are defined in Figure~\ref{sct}.

Rule [T-UDCAST] types casts which can fail, losing subject reduction. We use $\Int'\sim\IntP'$ as short for $\Int'<:\IntP'$ and $\IntP'<:\Int'$. The condition $\Int\not<:\IntP$ assures that this rule is not applied when  [T-UCAST] can be used. In rule [T-UDCAST] the cast of the class can be up or down, and the casts of the interfaces are possibly unrelated. Notice that this rule agrees with the prescriptions given in~\cite{java8} (Section 5.5.1) since there are no final classes in $\fjil$. As particular cases, all types can be sources of casts when the targets are intersections of interfaces and vice versa. In other words, if $\Int\sim\iInt$ for some $\iInt$, then $\IntP$ can be arbitrary and vice versa if $\IntP\sim\iInt'$ for some $\iInt'$, then $\Int$ can be arbitrary. In fact $\C$ or $\D$ can be  $\Obj$ and it is easy to verify that $\iInt\sim\Obj\&\iInt$ for all $\iInt$. The requirement ``the success of the cast is determined by the most restrictive component of the intersection type'' (see page 122 of~\cite{java8}) means that the classes in the intersections must be related by subtyping.

\begin{figure}[tb]
$\begin{array}{c}
 \myruleN{
	\begin{array}[b]{c}
	\derS{\overrightarrow{\x} : \overrightarrow{\T},  \this : \C } \te \T \quad
	\T  \m(\overrightarrow{\T}  \overrightarrow{\x})\in \sign \C
	\end{array}
}{\T  \m(\overrightarrow{\T}  \overrightarrow{\x}) \{ \ret\,  \te; \}  \myit{OK} \text{ in }  \C }{$\M$  \myit{OK}  in $\C$}\\
  \\ 
 \myruleN{
	\begin{array}[b]{c}
	\K = \C(\overrightarrow{\U}  \overrightarrow{\g},  \overrightarrow{\T}  \overrightarrow{\f}) \{ \super(\overrightarrow{\g});  \this.\overline{\f} = \overline{\f}; \} \quad
	\fields \D = \overrightarrow{\U}  \overrightarrow{\g} \quad \overrightarrow{\M}  \myit{OK} \text{ in }  \C \\
	\sign \C \quad
	\mtype \m \C  \text{ defined implies }\mbody \m \C \text{ defined}
	\end{array}
}{\clD \C  \D \I  \myit{OK} } {$\C$  \myit{OK}}\\\\
\myruleN{\sign \I }{\inD \I \I\Si  \myit{OK}}{$\I$  \myit{OK}}
\end{array}$
\caption{Method, Class  and Interface Declaration Typing Rules}\label{icdtr}
\end{figure}

We end this section by defining the rules for checking that method, class and
interface declarations are well formed (Figure~\ref{icdtr}).  For methods, the
only difference with respect to the corresponding rule of FJ is the use of $\vdash^*$ instead of $\vdash$ in rule [$\M$  \myit{OK}  in $\C$]. This allows us to type also methods whose return term is a $\lambda$-expression.  Furthermore, we observe that parameter types and return types of method declarations cannot be intersection types (according to Java specification). In the rules for classes and interfaces,  writing $\sign\T$ means that we require $\sign\T$ to be defined in the current class table. In this way we avoid to deal with additional requirements for the validity of method overriding (see Figure 19-2 of~\cite{P02}). The last condition in the premises of rule [$\C$  \myit{OK}] assures that a class implementing a set of interfaces contains the bodies of  all the abstract methods defined in those interfaces.

It is easy to verify that the class table of Figure~\ref{sct} is well formed.
An example of type derivation which uses this class table is:
\[\prooftree
\prooftree
\fields\C=\epsilon
\justifies
\der{}{\sf{new}\,\C(\,)}\C
\endprooftree
\ \ \mtype\m\C=\I\to\C
\prooftree
\prooftree
\sign\I= \C\,\sf{n}(\,)
\prooftree
\fields\C=\epsilon
\justifies
\der{}{\sf{new}\,\C(\,)}\C
\endprooftree
\justifies
\der{}{\ttyy{\epsilon\to\, \sf{new}\,\C(\,)}\I}\I
\endprooftree
\justifies
\derS{}{\epsilon\to\, \sf{new}\,\C(\,)}\I
\endprooftree
\justifies
\der{}{\sf{new\,\C(\,).m(\epsilon\to\,new\,\C(\,))}}\C
\endprooftree\]

\bigskip

To sum up, a program is well typed if the class table is well formed  and the
term has a type in the system $\vdash$ starting from the empty environment, using the declarations and the subtyping of the class table.

 \section{Subject Reduction and Progress}

The subject reduction proof of $\fjil$ essentially extends that of FJ~\cite{P02}
(Solution 19.5.1) taking into account intersection types and using the
flexibility of the $\vdash^*$ judgement. 
 The substitution lemma is shown simultaneously for both judgments $\vdash$ and $\vdash^*$. Instead subject reduction is proved only for $\vdash$ by induction on reductions. As usual, our type system enjoys {\em weakening}, i.e., $\der\Gamma\te\T$ implies $\der{\Gamma,\x:\U}\te\T$ and $\derS\Gamma\te\T$ implies $\derS{\Gamma,\x:\U}\te\T$.

\begin{lem}\label{ht}
\begin{enumerate}
\item \label{ht1}  If $\C[\&\iInt]\st\D[\&\iInt']$, then {\em $\fields\D\subseteq\fields\C$}.
\item\label{ht2}
If {\em $ \mtype \m \Int = \overrightarrow{\T} \rightarrow \T $}, then  {\em $\mtype \m \IntP = \overrightarrow{\T} \rightarrow \T$} for all $\IntP\st \Int$. 
\end{enumerate}
\end{lem}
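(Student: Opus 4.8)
The plan is to prove both parts by induction on the derivation of the subtyping judgment in the hypothesis, reading $\st$ as the relation inductively generated by reflexivity, transitivity, and the five rules of Figure~\ref{subt} (where the premises of [$\st\&$R] and [$\st\&$L] refer to $\st$ itself). I would treat part~(\ref{ht2}) first, since it is the cleaner of the two and its core fact is exactly the one promised in the remark preceding the statement.

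For part~(\ref{ht2}) I would isolate the monotonicity property
\[
\IntP\st\Int \quad\Longrightarrow\quad \sign\Int\subseteq\sign\IntP,
\]
and prove it by induction on the derivation of $\IntP\st\Int$. Since the metavariables range over types, both $\sign\IntP$ and $\sign\Int$ are defined, and each defining equation of Figure~\ref{sign} exhibits the supertype's signature as a $\uplus$-summand of the subtype's: in the [$\st\C$] and [$\st\I$] cases the supertype signature is one of $\sign\D$, $\sign{\I_j}$, which $\uplus$-decomposes $\sign\C$ (resp.\ $\sign\I$) and is therefore a subset; [$\st\Obj$] is trivial as $\sign\Obj=\epsilon$; reflexivity and transitivity are immediate; for [$\st\&$R] one uses $\sign{\T_1\&\cdots\&\T_n}=\biguplus_i\sign{\T_i}$ with the induction hypothesis on each premise; and [$\st\&$L] holds because the chosen component's signature is a $\uplus$-summand of $\sign\IntP$. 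Once monotonicity is available the conclusion follows at once: $\mtype\m\Int=\overrightarrow\T\to\T$ means $\T\m(\overrightarrow\T\overrightarrow\x)\in\sign\Int\subseteq\sign\IntP$, and since the definedness of $\sign\IntP$ forbids two headers with the same method name, this single header yields $\mtype\m\IntP=\overrightarrow\T\to\T$.

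For part~(\ref{ht1}) the field list of a type depends only on its class component, so if $\D=\Obj$ then $\fields\D=\epsilon$ and we are done; assume $\D\neq\Obj$. Writing $\C\preceq\D$ for ``$\D$ is reached from $\C$ along the chain of superclasses'', the definition of $\fields$ shows that $\C\preceq\D$ makes $\fields\D$ a prefix, hence a subset, of $\fields\C$. It therefore suffices to extract $\C\preceq\D$ from the hypothesis, and for this I would strengthen the statement to an invariant suitable for induction: whenever $\Int\st\U$, either the class component of $\U$ is absent or $\Obj$, or it is a non-$\Obj$ class $\D'$, in which case the class component of $\Int$ is a non-$\Obj$ class $\C'$ with $\C'\preceq\D'$. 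This invariant is proved by induction on the derivation of $\Int\st\U$; reflexivity, transitivity, [$\st\Obj$] and [$\st\I$] are routine, [$\st\C$] contributes exactly one superclass step, and in [$\st\&$R] the supertype's class component, if non-$\Obj$, is its leftmost component and hence the right-hand side of one premise. Applying the invariant to $\C[\&\iInt]\st\D[\&\iInt']$, whose right-hand class component is the non-$\Obj$ class $\D$, gives $\C\preceq\D$, and the field-prefix property concludes.

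I expect the delicate point to be the [$\st\&$L] case of the part~(\ref{ht1}) invariant. There the premise is $\Int_i\st\U$ for a single component $\Int_i$ of the intersection on the left, and the induction hypothesis returns a non-$\Obj$ class component of $\Int_i$; one must observe that, because the grammar places a class only in the leftmost position of an intersection, this forces $\Int_i$ to be the leftmost component and to be that very class, so that it is also the class component of the whole left-hand side. Keeping the invariant general enough to survive transitivity, i.e.\ stating it for arbitrary types on the left rather than only for $\C[\&\iInt]$, is what makes the bookkeeping go through; everything else reduces to unfolding the definitions of $\fields$ and $\signN$.
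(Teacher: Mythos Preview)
Your proposal is correct and follows the same strategy as the paper: part~(\ref{ht2}) is proved by establishing $\sign\Int\subseteq\sign\IntP$ by induction on the derivation of $\IntP\st\Int$, and part~(\ref{ht1}) by extracting the class-hierarchy relation $\C\st\D$ from the hypothesis using that interfaces cannot be subtypes of a non-$\Obj$ class. The paper's proof of~(\ref{ht1}) is a terse two-step inversion argument (from $\C\&\iInt\st\D\&\iInt'$ obtain $\C\&\iInt\st\D$, then conclude $\C\st\D$ via [$\st\&$L] since $\iInt\st\D$ is impossible); your inductive invariant is simply the rigorous elaboration of that same idea, with the [$\st\&$L] case you flag as delicate being exactly the place where the paper's parenthetical ``since $\iInt\st\D$ cannot hold'' does its work.
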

\begin{proof}
(\ref{ht1}) Let $\iInt$ and $\iInt'$ be present and $\D$ be not $\Obj$, the proof in the other cases being simpler. From $\C\&\iInt\st\D\&\iInt'$ we get $\C\&\iInt\st\D$  by rule [$\st\&$R]. Then $\C\st\D$ by rule [$\st\&$L] (since $\iInt\st\D$ cannot hold).\\
(\ref{ht2}) By induction on the derivation of $\IntP\st \Int$ one can show that $\sign\Int\subseteq\sign\IntP$.
\end{proof}

\begin{lem}[Substitution for $\vdash^*$ and $\vdash$]\label{subst}
\begin{enumerate}
  \item \label{subst1}If $ \derS{\Gamma,\x:\T} \te \Int$ and $ \derS\Gamma  \va \T$, then \mbox{$\derS \Gamma {\multisubst{\x\mapsto 
  \tty{\va}{\T}} \te}\Int$.}
  \item \label{subst2}If $\der{\Gamma,\x:\T} \te \Int$ and $ \derS\Gamma  \va \T$, then $\der \Gamma {\multisubst{\x\mapsto \tty{\va}{\T}} \te}\IntP$ for some $\IntP\st \Int$. 
\end{enumerate}
\end{lem}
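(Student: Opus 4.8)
The plan is to prove parts~(\ref{subst1}) and~(\ref{subst2}) together by induction on the height of the typing derivation, since the two judgements are mutually defined: rule [$\vdash$~$\vdash^*$] derives $\vdash^*$ from a $\vdash$-derivation, while [T-INVK], [T-NEW], [T-$\lambda$U] and [T-$\lambda$T] carry $\vdash^*$ premises. First I would dispose of part~(\ref{subst1}) by reducing it to part~(\ref{subst2}): unfolding [$\vdash$~$\vdash^*$] turns $\derS{\Gamma,\x:\T}\te\Int$ into $\der{\Gamma,\x:\T}{\tty\te\Int}\IntP$ with $\IntP\st\Int$, a strictly smaller derivation to which part~(\ref{subst2}) applies; the only subtlety is re-entering a $\vdash^*$ judgement of the right shape afterwards. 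This works because $\multisubst{\x\mapsto\tty\va\T}$ replaces each variable by the proper value $\tty\va\T$, which is a \emph{decorated} $\lambda$-expression when $\va$ is a pure $\lambda$-expression and never a pure one; hence the substituted term is a pure $\lambda$-expression exactly when $\te$ is, so the two branches of [$\vdash$~$\vdash^*$] line up. I would use weakening throughout, as recalled in the text.

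For part~(\ref{subst2}) I would do case analysis on the last rule of $\der{\Gamma,\x:\T}\te\Int$. The base case [T-VAR] with $\te=\x$ is where the substitution acts: the goal is to type $\tty\va\T$, which is exactly the (unfolded) hypothesis $\derS\Gamma\va\T$, giving $\IntP=\T=\Int$ when $\va$ is a pure $\lambda$-expression and some $\IntP\st\T$ otherwise; for $\te=\y\neq\x$ the term is untouched. Cases [T-NEW], [T-$\lambda$U] and [T-$\lambda$T] leave the type unchanged, so I would push the substitution under the constructor, respectively the binder, and invoke the induction hypothesis for~(\ref{subst1}) on the arguments, respectively the body --- in the $\lambda$-cases after $\alpha$-renaming the parameters $\overrightarrow\y$ away from $\x$ and from the free variables of $\va$, reordering the environment, and weakening $\derS\Gamma\va\T$ to $\derS{\Gamma,\overrightarrow\y:\overrightarrow\T}\va\T$. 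In [T-FIELD] and [T-INVK] the induction hypothesis for~(\ref{subst2}) on the receiver returns only a \emph{subtype} of its type, so I would recover the field type via Lemma~\ref{ht}(\ref{ht1}) (the field list can only grow, keeping $\f_j:\T_j$) and the method type via Lemma~\ref{ht}(\ref{ht2}) (it is invariant along $\st$), typing the arguments with the hypothesis for~(\ref{subst1}). The upcast [T-UCAST] closes by transitivity of $\st$, and [T-$\lambda$UCAST] by applying the hypothesis for~(\ref{subst2}) to its premise, using that substitution commutes with the decoration and preserves pure $\lambda$-expressions.

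The step I expect to be the main obstacle is the downcast rule [T-UDCAST]. Its side conditions constrain the \emph{static} type $\Int$ of the argument ($\Int\sim\C[\&\iInt]$, $\Int\not<:\IntP$, and $\C\st\D$ or $\D\st\C$ with $\IntP\sim\D[\&\iInt']$), but the induction hypothesis for~(\ref{subst2}) only yields a possibly strictly smaller type $\Int'\st\Int$ for the substituted argument. I would split on whether $\Int'\st\IntP$: if so, [T-UCAST] retypes the cast at $\IntP$; otherwise I would re-apply [T-UDCAST], writing $\Int'\sim\C'[\&\iInt'']$ with $\C'\st\C$. When the original condition is $\C\st\D$ this is harmless, as $\C'\st\C\st\D$ restores the premise. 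The genuinely delicate subcase is $\D\st\C$ with $\C'$ a strict subtype of $\C$ unrelated to $\D$: then the substituted cast is a failing (``stupid'') cast between incomparable classes, which no rule of Figure~\ref{ctr} can type --- precisely the phenomenon flagged by the remark that [T-UDCAST] ``types casts which can fail, losing subject reduction''. Following FJ's own treatment, the clean remedy is to admit such stupid casts in the type system of the runtime calculus so that~(\ref{subst2}) still goes through; I therefore expect the real argument either to appeal to such a rule or to restrict the statement away from failing casts. Every remaining case is routine reassembly, so this downcast subcase is where essentially all the difficulty sits.
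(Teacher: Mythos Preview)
Your approach---simultaneous induction on derivations, reducing part~(\ref{subst1}) to part~(\ref{subst2}) via rule [$\vdash$~$\vdash^*$], then case analysis on the last typing rule with Lemma~\ref{ht} handling the subtype slack in [T-FIELD] and [T-INVK]---is exactly the paper's. The paper's own proof, however, does not treat the cast rules at all: it covers [T-VAR], [T-FIELD], [T-INVK], [T-NEW], [T-$\lambda$U] (and remarks [T-$\lambda$T] is similar) and stops there.

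Your extra analysis of the cast cases is correct. [T-UCAST] and [T-$\lambda$UCAST] close exactly as you describe. Your diagnosis of [T-UDCAST] is also on target: the side conditions $\Int\sim\C[\&\iInt]$ and ``$\C\st\D$ or $\D\st\C$'' need not survive the passage to a strict subtype of $\Int$, so the lemma as literally stated fails on that rule. The paper resolves this exactly as you anticipated, only \emph{outside} the lemma: Theorem~\ref{subred} is stated for derivations that do not use [T-UDCAST], and the paper notes that subject reduction can be recovered in general by replacing [T-UDCAST] with [T-STUPIDCAST], under which your inductive step goes through trivially. So the substitution lemma is implicitly meant either under the same restriction or in the system with stupid casts; the paper simply leaves this tacit in the lemma and explicit in the theorem.
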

\begin{proof} (\ref{subst1}) and (\ref{subst2}) are proved by simultaneous 
induction on type derivations. 

(\ref{subst1}).  If  $\derS{\Gamma,\x:\T} \te \Int$,  then the last rule applied is [$\vdash$ $\vdash^*$].
We consider first the case of $\te$ being a pure $\lambda$-expression, and 
then $\te$ being any of the other terms.\\
{\bf Case $\te=\overrightarrow{\y}\to \te'$.}  
Then $\Int=\fInt$ and  the premise of rule [$\vdash$ $\vdash^*$] is $\der{\Gamma,\x:\T}{\ttyy{\overrightarrow{\y}\to \te'}\fInt}\fInt$.
By part (\ref{subst2}) of the induction hypothesis  we have that 
$\der{\Gamma}{\ttyy{\multisubst{\x\mapsto \tty{\va}{\T}} (\overrightarrow{\y}\to {\te'})}\fInt}\IntP$ 
for some $\IntP\st \fInt$. Since the last rule applied in the derivation must be  
[T-$\lambda$U], 
we get $\IntP=\fInt$. Using rule 
[$\vdash$ $\vdash^*$] we conclude
$\derS \Gamma {\multisubst{\x\mapsto \tty{\va}{\T} } (\overrightarrow{\y}\to {\te'})}\fInt$. The proof for the case $\te=\overrightarrow{\T}\overrightarrow{\y}\to \te'$ is similar.\\
{\bf Case $\te$ not a pure $\lambda$-expression.} The premise of rule [$\vdash$ $\vdash^*$] must be
$\der{\Gamma,\x:\T } \te\IntR$ for some $\IntR\st \Int$.  By part (\ref{subst2}) of the induction hypothesis
 we have that $\der \Gamma {\multisubst{\x\mapsto \tty{\va}{\T}} \te}\IntP$ for some $\IntP\st \IntR$. The
transitivity of $\st$ gives $\IntP\st \Int$. Applying rule [$\vdash$ $\vdash^*$] we conclude
$\derS \Gamma {\multisubst{\x\mapsto {\tty{\va}{\T}}} \te}\Int$.

(\ref{subst2}).  By cases on the last rule used in the derivation of $\der{\Gamma,\x:\T} \te \Int$.\\
{\bf Case [T-VAR].} $\der{\Gamma,\x:\T} \x \Int$ implies $\T=\Int$. The judgment $ \derS\Gamma  \va \T$ must be obtained by applying rule
[$\vdash$ $\vdash^*$] with premise $ \der{\Gamma}{  \tty{\va}{\T}} \IntP$ for some $\IntP\st \T$, as required.\\
{\bf Case [T-FIELD].} In this case  $\te= \te'.\f_j$ and\\
\centerline{$\myruleN{\der{\Gamma,\x:\T }{ \te' } \C\&\iInt \quad \fields \C = \overrightarrow{\T}  \overrightarrow{\f} }{\der{\Gamma,\x:\T }{ \te'.\f_j}{ \T_j}}{T-FIELD}$}\\
(the case in which $\&\iInt$ is missing is easier).\\
The induction hypothesis implies $\der \Gamma {\multisubst{\x\mapsto {\tty{\va}{\T}}} \te'}\IntR$ for some $\IntR\st \C\&\iInt$.
The subtyping rules of Figure \ref{subt} give $\IntR=\D[\&\iInt']$ for some $\D\st\C$ and $\iInt'$. By Lemma \ref{ht}({\ref{ht1}}) we have that
$\fields\C\subseteq\fields\D$ and then $\T_j \f_j\in  \fields \D$. Therefore applying rule [T-FIELD] we conclude 
${\der{\Gamma}{ \multisubst{\x\mapsto {\tty{\va}{\T}}}\te'.\f_j}{ \T_j}}$.\\
{\bf Case [T-INVK].} In this case  $\te= {\met {\te'}\m \te }$ and\\
\centerline{$ \myruleN{
	\begin{array}[c]{c}
	\der{\Gamma,\x:\T}{ \te' } {\Int'} \quad
	\mtype \m {\Int'} = \overrightarrow{\T} \rightarrow \T' \quad
	\derS{\Gamma,\x:\T } {\overrightarrow{\te}}{\overrightarrow{\T} }
	\end{array}
}{\der{\Gamma,\x:\T}  {\met {\te'}\m \te } \T'}{T-INVK}
 $}
\\
Part (\ref{subst1}) of the induction hypothesis on 
$\derS{\Gamma,\x:\T } {\overrightarrow{\te}}{\overrightarrow{\T} }$ implies
\mbox{$\derS\Gamma {\multisubst{\x\mapsto {\tty{\va}{\T}}} \overrightarrow{\te} }{ \overrightarrow{\T}}$.}
By induction hypothesis on $\der{\Gamma,\x:\T}{ \te' } {\Int'}$ we have that
\mbox{$\der \Gamma {\multisubst{\x\mapsto {\tty{\va}{\T}}} \te'}\IntR$} for some $\IntR\st \Int'$.
Lemma \ref{ht}({\ref{ht2}}) gives
$\mtype \m {\IntR} = \overrightarrow{\T} \rightarrow \T' $. Applying rule [T-INVK] we conclude 
$\der{\Gamma}  {\multisubst{\x\mapsto {\tty{\va}{\T}}} (\met {\te'}\m \te) } \T'$.
\\
{\bf Case [T-NEW].} By part (\ref{subst1}) of the induction hypothesis  on the judgments for the parameters.
\\
{\bf Case [T-$\lambda$U].} In this case $\Int=\fInt$ and   $\te=\ttyy{\overrightarrow{\y}\to \te'}\fInt$ and\\
\centerline{$ \myrule{
	\sign\Int=\sm {\T'}\m \T \x\quad\derS{\Gamma,\x:\T,\overrightarrow{\y}:\overrightarrow{\T}} {\te'}{\T'}
}{\der{\Gamma,\x:\T}  {\ttyy{\overrightarrow{\y}\to \te'}\fInt} \fInt}
 $}
\\
By part (\ref{subst1}) of the induction hypothesis $\derS{\Gamma,\overrightarrow{\y}:\overrightarrow{\T} } {\multisubst{\x\mapsto {\tty{\va}{\T}}}{\te'}}{\T'} $.
Applying rule [T-$\lambda$U]
we conclude $\der{\Gamma}{\ttyy{\multisubst{\x\mapsto {\tty{\va}{\T}}} (\overrightarrow{\y}\to {\te'})}\fInt}\fInt$.\\
The proof for the rule [T-$\lambda$T] is similar.
\end{proof}

\begin{lem}\label{tb}
If {\em $ \mtype \m \C = \overrightarrow{\T} \rightarrow \T $} and {\em $ \mbody \m \C = (\overrightarrow{\x}, \te )$}, then
  \begin{align*}
    \derS{\overrightarrow{\x}:\overrightarrow{\T}, \this:\D} \te\T
  \end{align*}
for some $\D$ such that $\C\st\D$. 
\end{lem}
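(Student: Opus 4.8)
The plan is to proceed by induction on the derivation of $\mbody\m\C = (\overrightarrow{\x},\te)$, which amounts to walking up the superclass chain from $\C$ until the class that actually declares $\m$ is reached. The two cases of the induction match the two defining rules of $\mbody$ in Figure~\ref{aux}.

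In the base case $\m$ is declared directly in $\C$, say by $\T'\,\m(\overrightarrow{\U}\,\overrightarrow{\x})\{\ret\te;\}$ with $\CT(\C) = \clD\C{\D_0}\I$. Since the fixed class table is well formed, $\C$ is \myit{OK}, and hence this declaration is \myit{OK} in $\C$ by rule [$\M$ \myit{OK} in $\C$]. That rule supplies exactly the judgement I want, $\derS{\overrightarrow{\x}:\overrightarrow{\U},\this:\C}\te{\T'}$, together with $\T'\,\m(\overrightarrow{\U}\,\overrightarrow{\x})\in\sign\C$. On the other hand $\mtype\m\C=\overrightarrow{\T}\to\T$ means $\sm\T\m\T\x\in\sign\C$. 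As $\sign\C$ is defined, the operator $\biguplus$ forbids two headers sharing a method name, so the $\m$-header of $\sign\C$ is unique; hence $\T'=\T$ and $\overrightarrow{\U}=\overrightarrow{\T}$. Taking $\D:=\C$ and using $\C\st\C$ concludes the case.

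In the inductive case $\m$ is not declared in $\C$ and $\mbody\m\C=\mbody\m{\D_0}$ with $\CT(\C)=\clD\C{\D_0}\I$. The key step is to transport the method type to the superclass, i.e.\ to prove $\mtype\m{\D_0}=\overrightarrow{\T}\to\T$. Since $\mbody\m{\D_0}$ is defined (it equals $(\overrightarrow{\x},\te)$), the method $\m$ is declared with a body in some class reachable from $\D_0$, so an $\m$-header belongs to $\sign{\D_0}$; and as $\sign{\D_0}$ is one of the lists combined by $\biguplus$ into $\sign\C$, while the $\m$-header of $\sign\C$ is the unique $\sm\T\m\T\x$, the $\m$-header of $\sign{\D_0}$ must coincide with it. Thus $\mtype\m{\D_0}=\overrightarrow{\T}\to\T$, and together with $\mbody\m{\D_0}=(\overrightarrow{\x},\te)$ the induction hypothesis yields $\derS{\overrightarrow{\x}:\overrightarrow{\T},\this:\D'}\te\T$ for some $\D'$ with $\D_0\st\D'$. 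Finally rule [$\st\C$] gives $\C\st\D_0$, so transitivity of $\st$ gives $\C\st\D'$, the required witness.

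I expect the only delicate point to be this preservation of the method type up the chain: one must exclude that the $\m$-header observed in $\sign\C$ stems exclusively from the implemented interfaces $\overrightarrow{\I}$ rather than from $\sign{\D_0}$. This is settled precisely by the definedness of $\mbody\m{\D_0}$, which guarantees a concrete declaration above $\D_0$ and hence an $\m$-header in $\sign{\D_0}$, combined with the single-header-per-name property enforced by $\biguplus$ in a well-formed class table.
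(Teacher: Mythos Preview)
Your proof is correct and follows the same approach as the paper: locate the class in which $\m$ is actually declared by walking up the superclass chain, and invoke rule [$\M$ \myit{OK} in $\C$] there. The paper's proof is a two-line sketch that leaves implicit exactly the point you flag as delicate---why the parameter and return types obtained from the well-formedness rule coincide with those given by $\mtype\m\C$---whereas you spell this out carefully via the single-header-per-name property of $\biguplus$; your version is thus a fully fleshed-out rendering of the paper's argument rather than a different route.
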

\begin{proof}
By definition of $\texttt{mbody}$, the method $\m$ must be declared either in
class $\C$ or in some class $\D$ which is a superclass of $\C$. In both cases rule [$\M$  \myit{OK}  in  $\C$] of Figure~\ref{icdtr} gives the desired typing judgement.
\end{proof}
\begin{lem}\label{key}
If $\derS\Gamma \te\Int$, then $\der\Gamma{ \tty \te \Int}\IntP$ for some $\IntP\st \Int$. 
\end{lem}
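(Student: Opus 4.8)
The plan is to prove the statement by a single inversion on the definition of the starred judgment $\vdash^*$. Recall that $\derS\Gamma\te\Int$ is introduced by exactly one inference rule, namely [$\vdash$ $\vdash^*$], whose conclusion is $\derS\Gamma\te\Int$ and whose premises are $\der\Gamma{\tty\te\Int}\IntP$ together with $\IntP\st\Int$. Since no other rule has a conclusion of the form $\derS\Gamma\te\Int$, any derivation of $\derS\Gamma\te\Int$ must end with an application of [$\vdash$ $\vdash^*$]. First I would invoke this fact, and then simply read off the premises of that final rule application, which yields a type $\IntP$ with $\IntP\st\Int$ and $\der\Gamma{\tty\te\Int}\IntP$ — exactly the conclusion sought.

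To make the argument fully explicit I would unfold the abbreviation $\tty\te\Int$ into its two defining cases. If $\te$ is a pure $\lambda$-expression then $\tty\te\Int=\ttyy\te\Int$, so the recovered judgment is $\der\Gamma{\ttyy\te\Int}\IntP$; here the last rule in that subderivation can only be [T-$\lambda$U] or [T-$\lambda$T], whence $\Int$ is forced to be a functional type and $\IntP=\Int$, so the subtyping constraint holds by reflexivity. If $\te$ is not a pure $\lambda$-expression then $\tty\te\Int=\te$ and the premise is directly $\der\Gamma\te\IntP$ with $\IntP\st\Int$, where $\IntP$ is permitted to be a proper subtype of $\Int$.

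I do not expect any genuine obstacle here: the content of the lemma is precisely the inversion of the one-rule definition of $\vdash^*$, isolated as a named statement only because it is invoked repeatedly (for instance in the subject reduction argument, where one needs to pass from a $\vdash^*$ assumption back to a plain $\vdash$ typing with a subtype). The whole proof is therefore a one-line appeal to the shape of rule [$\vdash$ $\vdash^*$], with the two-case unfolding above added merely for completeness.
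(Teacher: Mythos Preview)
Your proposal is correct and matches the paper's own proof, which is the one-line inversion: since $\derS\Gamma\te\Int$ can only be derived by rule [$\vdash$ $\vdash^*$], its premise $\der\Gamma{\tty\te\Int}\IntP$ with $\IntP\st\Int$ is exactly what is claimed. Your additional case split on whether $\te$ is a pure $\lambda$-expression is not needed for the lemma as stated (the conclusion already uses $\tty\te\Int$), but it is harmless extra commentary.
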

\begin{proof}The judgment $ \derS\Gamma  \te \Int$ must be obtained by applying rule
[$\vdash$ $\vdash^*$] with premise $ \der{\Gamma}{  \tty{\te}{\Int}} \IntP$ for some $\IntP\st \Int$, as required.\end{proof}

\begin{thm}[Subject Reduction]\label{subred}
If $\der\Gamma \te\Int$ without using rule \mbox{\em [T-UDCAST]} and $\te\red \te'$, then $\der\Gamma{ \te'}\IntP$  for some $\IntP\st \Int$.
\end{thm}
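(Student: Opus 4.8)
The plan is to proceed by induction on the reduction relation $\te\red\te'$, mirroring the structure of the FJ subject reduction proof but carefully tracking where the $\vdash^*$ judgement and the decoration mapping $\tty{\,}{}$ enter. The congruence rules of Figure~\ref{conr} will all be handled by the induction hypothesis together with the fact that the individual typing rules (Figure~\ref{sdtr}) are compatible with replacing a subterm by one of a subtype; Lemma~\ref{ht} is the workhorse here, since it guarantees that $\fields$ and $\mtype$ are preserved downward along $\st$. So the real content lies in the computational rules of Figure~\ref{comr}.

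For the computational rules I would reason as follows. For [E-ProjNew], from $\der\Gamma{\new\C\va.\f_j}{\T_j}$ the last rule is [T-FIELD] with $\der\Gamma{\new\C\va}{\C[\&\iInt]}$, and the premise $\derS\Gamma{\va_j}{\T_j}$ is available from [T-NEW]; Lemma~\ref{key} then turns this $\vdash^*$ judgement for $\va_j$ into a $\vdash$ judgement for the decorated value $\tty{\va_j}{\T_j}$ at a subtype of $\T_j$, which is exactly the reduct. The method cases [E-InvkNew], [E-Invk$\lambda$U], [E-Invk$\lambda$T] are the heart of the argument: here I would invert [T-INVK] to obtain $\mtype\m\Int=\overrightarrow\T\to\T$ and $\derS\Gamma{\overrightarrow\va}{\overrightarrow\T}$, use Lemma~\ref{tb} (for [E-InvkNew]) or the $\lambda$-rules [T-$\lambda$U]/[T-$\lambda$T] (for the $\lambda$ cases) to get that the body $\te$ type-checks under the environment binding the parameters and $\this$ at the declared types, and then apply the substitution Lemma~\ref{subst}, which is stated precisely so that substituting decorated values $\tty{\overrightarrow\va}{\overrightarrow\T}$ and $\this\mapsto\new\C\va$ yields a type that is a subtype of $\T$. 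The decoration $\tty\te\T$ appearing on the right-hand side of these rules is matched by Lemma~\ref{key}, closing the case. The cast rules [E-CastNew] and [E-Cast$\lambda$Target] follow from [T-UCAST]/[T-$\lambda$UCAST] and transitivity of $\st$, while [E-Cast$\lambda$] is where a pure $\lambda$-expression acquires its decoration: the hypothesis must come from [T-$\lambda$UCAST], which already supplies $\der\Gamma{\ttyy{\te_\lambda}\fInt}\fInt$, exactly the type of the reduct.

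The main obstacle I anticipate is the careful bookkeeping around the $\vdash^*$ judgement and the decoration $\tty{\,}{}$, rather than any single deep idea. Because the reduction rules introduce decorated $\lambda$-expressions on their right-hand sides (e.g. $\tty\te\T$ in [E-InvkNew]), I must ensure at each step that a pure $\lambda$-expression appearing as an actual parameter or a returned body is assigned precisely the functional target type that makes [T-$\lambda$U] or [T-$\lambda$T] applicable; this is exactly what Lemma~\ref{key} packages, and it is the reason the substitution lemma was proved simultaneously for both $\vdash$ and $\vdash^*$. The exclusion of [T-UDCAST] in the hypothesis is essential, since that rule types downcasts that may fail at runtime and hence does not preserve typing; I would note that no reduction rule fires on a term whose only typing derivation uses [T-UDCAST] in a way that would be needed here, so the restriction propagates soundly through the induction.
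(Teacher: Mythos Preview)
Your plan is correct and follows essentially the same route as the paper: induction on the reduction derivation, with the computational cases handled by inverting the relevant typing rule, invoking Lemma~\ref{tb} (for [E-InvkNew]) or the $\lambda$-typing rules (for the $\lambda$-invocation cases), then applying Lemma~\ref{key} and the substitution Lemma~\ref{subst}(\ref{subst2}); the paper also spells out [E-Invk-Arg] to illustrate the congruence cases, where the key observation (which you leave implicit) is that a reducible $\te$ cannot be a pure $\lambda$-expression, so $\tty\te{\T'}=\te$ and Lemma~\ref{key} lets you pass from $\vdash^*$ to $\vdash$ before applying the induction hypothesis.
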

\begin{proof}
By induction on a derivation of $\te\red \te'$, with a case analysis on the final rule. 
We only consider interesting cases.

  \medskip

{\bf Case} \prooftree
\fields \C = \overrightarrow \T~\overrightarrow \f
\justifies
\new \C \va.\f_j \red \tty{\va_j}{\T_j}
               \using \text{[E-ProjNew]}
\endprooftree\\[3pt]
The l.h.s.\ is typed as follows:\\
\centerline{\prooftree
\prooftree
\fields \C  = \overrightarrow \T~\overrightarrow \f \quad \derS\Gamma {\overrightarrow \va }{ \overrightarrow \T}
\justifies
\der \Gamma {\new \C \va} \C
\endprooftree
\justifies
\der\Gamma {\new \C  \va. \f_j}{ \T_j}
\endprooftree}\\[3pt]
By Lemma~\ref{key} $\derS\Gamma { \va_j }{  \T_j}$ implies $\der\Gamma {\tty{
\va_j}{\T_j} } {\IntP}$ for some $\IntP\st\T_j$. 

  \medskip

{\bf Case} \prooftree
\mbody \m \C  = (\overrightarrow \x, \te'')\quad\mtype \m \C =\overrightarrow{\T} \rightarrow \T 
\justifies
\met{\new \C\va }\m  \vu   \red
 \multisubst{\overrightarrow \x\mapsto  \tty {\overrightarrow \vu} {\overrightarrow{\T}}, \this\mapsto \new\C\va }\tty {\te''} {\T}
           \using \text{[E-InvkNew]}
\endprooftree\\[3pt]
The l.h.s.\ is typed as follows: \\
\centerline{
\prooftree
\der\Gamma {\new \C  \va} \C\quad\mtype \m \C =\overrightarrow{\T} \rightarrow \T \quad\derS\Gamma { \overrightarrow \vu }{ \overrightarrow \T}
\justifies
\der\Gamma {\met{\new \C  \va} \m  \vu}\T
\endprooftree}
\\[3pt]
By Lemma~\ref{tb} $\mbody \m \C= (\overrightarrow \x, \te'')$ implies
$\derS{\overrightarrow \x:\overrightarrow \T, \this:\D}{\te''}\T$ with
$\C\st\D$. From $\der\Gamma {\new \C  \va} \C$ and $\C\st\D$ we get $\derS\Gamma {\new \C  \va} \D$.\\  By Lemma~\ref{key} $\der{\overrightarrow \x:\overrightarrow \T, \this:\D}{\tty {\te''} \T}\IntP$ for some $\IntP\st\T$. By Lemma~\ref{subst}(\ref{subst2}) and weakening\\\centerline{$\der\Gamma{\multisubst{\overrightarrow \x\mapsto  {\tty {\overrightarrow \vu} {\overrightarrow{\T}}}, \this\mapsto \new \C\va}\tty{\te''} {\T}}\IntR$ for some $\IntR\st\IntP$}
Finally by transitivity of $\st$ we have $\IntR\st\T$.
  
  \medskip

{\bf Case} \prooftree
\mtype \m \fInt =\overrightarrow{\T} \rightarrow \T
\justifies
\met{\ttyy{\overrightarrow{\y}\to {\te''} } \fInt}\m  \va   \red
 \multisubst{\overrightarrow \y\mapsto \tty{\overrightarrow \va}{\overrightarrow \T}}\tty {\te''}  \T
           \using \text{[E-Invk$\lambda$U]}
\endprooftree\\[3pt]
The l.h.s. is typed as follows:
\[ \prooftree
\prooftree
\sign\fInt=\T \m(\overrightarrow{\T} \overrightarrow{\x})\quad\derS{\Gamma,\overrightarrow \y: \overrightarrow \T } {\te''}  \T
\justifies
\der\Gamma {\ttyy{\overrightarrow{\y}\to {\te''} } \fInt}\fInt
\endprooftree
\mtype \m \fInt =\overrightarrow{\T} \rightarrow \T \quad\derS\Gamma {\overrightarrow \va }{ \overrightarrow \T}
\justifies
\der\Gamma{\met{ \ttyy{\overrightarrow{\y}\to {\te''} } \fInt}\m \va}\T
\endprooftree\]
By Lemma~\ref{key} $\der{\Gamma,\overrightarrow \y:\overrightarrow \T}{\tty {\te''}  \T}\IntP$ for some $\IntP\st\T$. By Lemma~\ref{subst}(\ref{subst2}) we derive $\der\Gamma{\multisubst{\overrightarrow \y\mapsto \tty{\overrightarrow \va}{\overrightarrow \T}}\tty{\te''}  \T}\IntR$ for some $\IntR\st\IntP$.
Finally by transitivity of $\st$ we have $\IntR\st\T$.

\medskip

{\bf Case} \prooftree
 \te \red \te'
\justifies
\w.\m(\overrightarrow \va, \te, \overrightarrow \te) 
   \red \w.\m(\overrightarrow \va, \te', \overrightarrow \te)
     %
           \using \text{[E-Invk-Arg]}
\endprooftree\\ 
The l.h.s.\ is typed as follows:\\
\centerline{\prooftree
\der\Gamma {\w} \Int\quad\mtype \m \Int =\overrightarrow{\T} \rightarrow \T \quad
\derS\Gamma { \overrightarrow \va}{ \overrightarrow {\T_{\va}}}\quad
\derS\Gamma {\te}{\T'}\quad
\derS\Gamma {\overrightarrow \te}{ \overrightarrow {\T_{\te}}}
\justifies
\der\Gamma {\w.\m(\overrightarrow \va, \te, \overrightarrow \te)}\T
\endprooftree}
where $\overrightarrow{\T}= \overrightarrow {\T_{\va}},\T',\overrightarrow{\T_{\te}}$. By Lemma~\ref{key} $\derS\Gamma {\te}{\T'}$ implies $\der\Gamma {\tty\te{\T'}}{\IntP}$ for some $\IntP\st\T'$. Since $ \te \red \te'$ implies that $\te$ cannot be a $\lambda$-expression we get $\tty\te{\T'}=\te$. By induction hypothesis $\der\Gamma {\te'}{\IntR}$ for some $\IntR\st\IntP$. Being $\IntR\st\T'$ applying rule [$\vdash$ $\vdash^*$]
 we derive $\derS\Gamma {\te'}{\T'}$. Therefore using the typing rule
\rn{[T-INVK]} we conclude $\der\Gamma {\w.\m(\overrightarrow \va, \te, \overrightarrow \te)}\T$.
\end{proof}

Rule [T-UDCAST] breaks subject reduction already for FJ, as shown in \cite{P02} (Section 19.4). Following \cite{P02} we can recover subject reduction by erasing the condition ``either $\C \st  \D$ or $\D \st  \C$'' in rule [T-UDCAST]. In this way the rule becomes:
\[\myruleN{\der\Gamma  \te \Int \quad \Int\not<:\IntP }{\der\Gamma {\cast\IntP \te } \IntP} {T-STUPIDCAST}\]

The closed terms that are typed without using rule  [T-UDCAST] enjoy the standard progress property. This can be easily proven by just looking at the shapes of well-typed irreducible terms.

\begin{thm}[Progress]
If $\der{} \te\Int$ without using rule {\em [T-UDCAST]} and $\te$ cannot reduce, then $\te$ is a proper value.
\end{thm}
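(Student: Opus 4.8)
The plan is to argue by induction on the derivation of $\der{}\te\Int$, with a case analysis on the last typing rule; since the rules of Figures~\ref{sdtr} and~\ref{ctr} are driven by the head constructor of $\te$, this is the same as a case analysis on the shape of $\te$, and for each shape I would show that either $\te$ reduces or $\te$ is already a proper value. Two shapes are immediate: $\te$ cannot be a variable (the environment is empty, so \text{[T-VAR]} never applies) nor a pure $\lambda$-expression (these are typed only by $\vdash^*$, never by $\vdash$), while a decorated $\lambda$-expression $\ttyy{\te_\lambda}\fInt$ is by definition a proper value.

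Before the compound cases I would isolate two auxiliary facts. The first is a \emph{canonical forms} observation obtained by inversion on the syntax-directed rules: a proper value typed by $\vdash$ with a type of the form $\C[\&\iInt]$ must be an object $\new\C\va$, because the only alternative proper value, a decorated $\lambda$-expression, is typed exclusively with a functional type $\fInt$, which carries no class component; dually, a proper value typed with a functional type must be a decorated $\lambda$-expression. The second fact concerns the argument lists typed by $\vdash^*$ in \text{[T-NEW]}, \text{[T-INVK]}: each argument is either a pure $\lambda$-expression, hence already a value, or else not a pure $\lambda$-expression, in which case the premise of rule [$\vdash$ $\vdash^*$] supplies a strictly smaller $\vdash$-derivation for the very same term and the induction hypothesis applies. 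Thus every argument is ``a value or reducible,'' so the leftmost non-value argument (if any) reduces under the congruence rules \text{[E-New-Arg]}/\text{[E-Invk-Arg]} (the latter being enabled because the receiver is a proper value), while if all arguments are values the matching computation rule fires.

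With these in place the remaining cases are routine. For a field access $\te_0.\f_j$, if $\te_0$ reduces we close by \text{[E-Field]}; otherwise $\te_0$ is a proper value of type $\C[\&\iInt]$, so by canonical forms $\te_0=\new\C\va$ with $\f_j\in\fields\C$, and \text{[E-ProjNew]} applies. For a method call $\met{\te_0}\m\te$, the receiver either reduces (\text{[E-Invk-Recv]}) or is a proper value; if $\te_0=\new\C\va$ then $\mtype\m\C$ is defined and the last premise of rule [$\C$~\myit{OK}] in Figure~\ref{icdtr} guarantees $\mbody\m\C$ is defined, so once the arguments are values \text{[E-InvkNew]} fires, and if $\te_0=\ttyy{\te_\lambda}\fInt$ then its type is functional and \text{[E-Invk$\lambda$U]} or \text{[E-Invk$\lambda$T]} fires. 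For an object $\new\C\te$, if all arguments are values it is itself a proper value, otherwise \text{[E-New-Arg]} applies.

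The delicate case, and the one I expect to be the main obstacle, is the cast $\cast\IntP{\te_0}$ (recall \text{[T-UDCAST]} is excluded by hypothesis, since its casts may genuinely get stuck). When it is typed by \text{[T-$\lambda$UCAST]} the argument is a pure $\lambda$-expression and \text{[E-Cast$\lambda$]} fires. When it is typed by \text{[T-UCAST]}, either $\te_0$ reduces (closing by \text{[E-Cast]}) or it is a proper value with $\Int_0\st\IntP$; if $\te_0=\new\C\va$ then $\C\st\IntP$ is exactly the side condition of \text{[E-CastNew]}. The subtle subcase is a decorated $\lambda$-expression $\ttyy{\te_\lambda}\fInt$ cast to a supertype $\IntP$ with $\fInt\st\IntP$: here one has to verify that the reduction \text{[E-Cast$\lambda$Target]} indeed covers every such target, i.e.\ that the supertypes of a functional type admitted by the subtyping rules of Figure~\ref{subt} match the shape for which that computational rule is stated. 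Pinning down this correspondence between the typeable casts of a proper $\lambda$-value and the available cast-reduction rules is where the argument requires the most care.
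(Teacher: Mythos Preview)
Your induction on the typing derivation with a case analysis on the head constructor is exactly the standard progress argument; the paper itself offers no detailed proof, only the one-line remark that it ``can be easily proven by just looking at the shapes of well-typed irreducible terms.'' Your treatment of variables, pure $\lambda$-expressions, field access, method invocation, object creation, and the two auxiliary facts (canonical forms and the reduction of $\vdash^*$-typed arguments to smaller $\vdash$-derivations) is correct and is what a spelled-out proof would contain.

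Your caution in the upcast subcase for a decorated $\lambda$-expression is warranted and in fact points to a genuine wrinkle in the calculus as literally written. Rule \text{[E-Cast$\lambda$Target]} is stated with a \emph{functional} target $\fInt'$, yet not every supertype of a functional type is functional: by rule [$\st\Obj$] one has $\fInt\st\Obj$ and $\Obj$ is a class; likewise a super-interface of $\fInt$ with no abstract method (such as $\Ei$ in Figure~\ref{sct}) is not functional. Concretely, the source term $\cast\Obj{(\cast\I{\te_\lambda})}$, typed by \text{[T-$\lambda$UCAST]} and then \text{[T-UCAST]} without any use of \text{[T-UDCAST]}, reduces to $\cast\Obj{\ttyy{\te_\lambda}\I}$, and no computational rule of Figure~\ref{comr} applies if the metavariable $\fInt'$ is read strictly. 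The paper's own characterisation of stuck terms, given immediately after the theorem, declares $\cast\Int{\ttyy{\te_\lambda}\fInt}$ stuck only when $\fInt\not\st\Int$, which makes clear that the intended reading of \text{[E-Cast$\lambda$Target]} allows an arbitrary supertype as the cast target. Under that intended reading your argument closes; under the literal reading progress fails precisely at the point you singled out, so your instinct to isolate it was correct.
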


Using rule  [T-UDCAST] we can type casts of proper values which cannot be reduced, like, for example, $\cast\C{(\sf{new}\, \Obj (\;))}$ with $\C$ different from $\Obj$. An example involving a $\lambda$-expression is $\cast\C{\ttyy{\epsilon \to \sf{new}\,\Obj(\,)}{\I}}$, where $\I$ is the interface with the only signature $\Obj\,\m(\;\;)$. This run-time term can be obtained by reducing $\cast\C{\cast{\I}{(\epsilon \to \sf{new}\,\Obj(\,))}}$.

To characterise the stuck terms (i.e., the irreducible terms which can be obtained by reducing typed terms and are not values) we resort to the notion of evaluation context, as done in \cite{P02} (Theorem 19.5.4). {\em Evaluation contexts} $\E$ are defined as expected: 
\[\E::= [\;] \mid \E.\f\mid \met \E\m\te\mid \w.\m (\overrightarrow \va, \E, \overrightarrow\te)\mid \sf{new}\, \C(\overrightarrow \va, \E, \overrightarrow\te)\mid(\Int)\E\]
Stuck terms are evaluation contexts with holes filled by casts of typed proper values which cannot reduce, i.e., terms of the shapes $\cast \Int {\new\C\va}$ with $\C\not\st\Int$ and $\cast\Int{\ttyy{\te_\lambda}\fInt}$ with $\fInt\not\st\Int$. Notice that $\cast {\A[\&\iInt]} {\new\C\va}$ cannot be typed when $\A,\C$ are unrelated classes. Instead rule [T-UDCAST] allows us to type all terms of the shape $\cast\Int{\ttyy{\te_\lambda}\fInt}$, when $\ttyy{\te_\lambda}\fInt$ has a type.

 \section{Default Methods}\label{dm}

This section is devoted to the extension of interfaces with default methods.
This extension shows the expressivity of casting $\lambda$-expressions to
functional types, whose definition is also changed, see below. For simplicity,
we omit the keyword $\sf{default}$ assuming that all methods implemented in
interface declarations are default methods, while any method terminated by a
semicolon is an abstract method.
This is a slight difference with respect to the syntax of Java, where the
$\sf{default}$ key is mandatory if an interface method has a body, and an
interface method lacking the $\sf{default}$ modifier is implicitly abstract.
Note that in Java, providing a body without the $\sf{default}$ modifier leads to
a compilation error.\footnote{Indeed, the presence of a method with body but
without the $\sf{default}$ modifier in an interface with a single abstract method also makes
the Java compiler bailout: the interface is not considered as a functional
interface at all.}

The first obvious modification is interface declaration, which includes also
method bodies:\[\ID ::= \inDD \I \I \Si \M\] This new interface declaration
requires to distinguish between methods defined in interfaces with or without
implementations. For this reason we consider two mappings from pre-types to
method headers, called $\texttt{A-mh}$ and $\texttt{D-mh}$, see
Figure~\ref{asds}.

\begin{figure}[tb]
$\begin{array}{c}
\prooftree
\CT(\I)=\inDD \I \I \Si \M
\justifies
\Asign \I = \overrightarrow{\Si}\uplus \Asign{\overrightarrow{\I}}
\endprooftree\\\\
\prooftree
\CT(\I)=\inDD \I \I \Si \M\quad \overline{\M}=\overline{\Si'\, \{ \ret \te; \}}
\justifies
\Dsign \I= \overrightarrow{\Si'}\uplus \Dsign{\overrightarrow{\I}}
\endprooftree\\ \\
\Asign{\I_1,\ldots ,\I_n} =  \Asign{\I_1\&\ldots \& \I_n} = 
	 \Asign{\C\&\I_1\&\ldots \& \I_n} = \biguplus_{1\leq i\leq n} \Asign{\I_i}\\\\
	 \Dsign{\I_1,\ldots ,\I_n} = \Dsign{\I_1\&\ldots \& \I_n} =\Dsign{\C\&\I_1\&\ldots \& \I_n} = \biguplus_{1\leq i\leq n} \Dsign{\I_i} \\
\text{if $\Dsign{\I_j}\cap\Dsign{\I_\ell}\not=\epsilon$ implies either $\I_j\st\I_\ell$ or $\I_\ell\st\I_j$}
	\end{array}$ \caption{Functions $\texttt{A-mh}$ and $\texttt{D-mh}$}\label{asds}
	\end{figure}
The mapping $\texttt{A-mh}$ gives the headers of abstract methods (without implementations) and the mapping $\texttt{D-mh}$ gives the headers of default methods (with implementations). For an interface $\I$ the set $\Asign\I$ contains the method headers defined in the declaration of $\I$ and those inherited,  the set $\Dsign\I$ contains the headers of the methods implemented in the declaration of $\I$ and those inherited. We also need to define  $\texttt{A-mh}$ and  $\texttt{D-mh}$ for lists of interfaces and for intersection pre-types. Java allows multiple inheritance from interfaces and intersections of interfaces only when there is no ambiguity in the definition of implemented methods. This is reflected in the conditions for the definition of $\texttt{D-mh}$. 
 
 The definition of $\signN$ for classes remains the same, although classes can inherit method bodies from interfaces. For a list of interfaces  (which can be a single interface) $\sign{\overrightarrow\I}$ is the union of $\Asign{\overrightarrow\I}$ and $\Dsign{\overrightarrow\I}$, when they do not contain the same method name, see page 292 of~\cite{java8}. 
 \[\begin{array}{ll}
	\sign {\overrightarrow\I} = \Asign {\overrightarrow\I} \uplus\Dsign {\overrightarrow\I}&\text{ if } \Asign{\overrightarrow\I} \cap \Dsign{\overrightarrow\I}=\emptyset
	\end{array}\]
For an intersection pre-type we take the  union $\uplus$ of the method headers in the class (if any) and those in the list of interfaces:
 \[\begin{array}{ll}
 \sign{\I_1\&\ldots\&\I_n}=\sign{\I_1,\ldots,\I_n}\qquad\qquad
 \sign{\C\&\I_1\&\ldots\&\I_n}=\sign\C\uplus\sign{\I_1,\ldots,\I_n}
	\end{array}\]	
In the above definitions  we use $\uplus$ to avoid the same method name with different signatures, as we explained  in discussing the function $\signN$ in Section~\ref{syntax}.	

The definition of types is unchanged, while a type is a \emph{functional type}
if it is an interface or an intersection of interfaces and it is mapped by
$\texttt{A-mh}$ to a singleton, see~\cite{java8} page 321.
Therefore, an interface (intersection of interfaces) having a single abstract
method can have several default methods. We observe this, for example, by
looking at the Oracle documentation of the \texttt{Function} functional
interface.\footnote{This functional interface has a single abstract method,
\texttt{apply}, and two default methods, \texttt{compose} and \texttt{andThen}
(\url{https://docs.oracle.com/javase/8/docs/api/java/util/function/Function.html}).}
We still use $\fInt$ to
range over functional types.

 The change of method headers naturally reflects on the lookup functions for method types. We now need two functions, $\texttt{A-mtype}$ and  $\texttt{D-mtype}$ for types:
 \[\myrule{\sm \T  \m \T \x\in
		  \Asign \Int 
		}{ \Amtype \m  \Int  = \overrightarrow{\T} \rightarrow \T } 
	\qquad	\qquad 
	\myrule{
		\sm \T  \m \T \x \in \Dsign \Int
		}{ \Dmtype \m  \Int = \overrightarrow{\T} \rightarrow \T }\]
 while the definition of $\sf{mtype}$ remains the same, but it uses the new function $\signN$. 
 
When looking for method bodies, we need to take into account also default methods defined in interface declarations, see Section 9.4.1 and  pages 522, 532 of~\cite{java8}. Figure~\ref{mbl} gives the new definition of the function $\texttt{mbody}$. In the rule for lists and intersections of interfaces we choose the implementation given in the smallest interface (which must be unique). In the rule for intersections we first consider the implementation given in the class (if any) and then those in the interfaces. 
 \begin{figure}[tb]
 $\begin{array}{c}
 $\myrule{ \begin{array}[b]{c}
		\CT(\C) = \clD \C \D \I \\
		\T  \m( \overrightarrow{\U} \overrightarrow{\x}) \{ \ret  \te; \} \in \overrightarrow{\M}
		\end{array}}{\mbody\m  \C = (\overrightarrow{\x},\te)} $\\ \\
	$\myrule{
		\begin{array}[b]{c}
		\CT(\C) = \clD \C \D \I \\
		\m\text{ is  not  defined in } \overrightarrow{\M} \quad  \mbody \m  \D \text{  is defined}
		\end{array}}{\mbody \m  \C  = \mbody \m  \D  }$\\ \\
		$\myrule{
		\begin{array}[b]{c}
		\CT(\C) = \clD \C \D \I \\
		\m\text{ is  not  defined in } \overrightarrow{\M} \quad  \mbody \m \D \text{ is  not  defined}
		\end{array}}{\mbody \m  \C  = \mbody \m{\overrightarrow{\I} }}$\\ \\
 \prooftree
\CT(\I) =\inDD \I \I \Si \M\quad  \sm \T  \m \T \x \{ \ret  \te; \} \in \overrightarrow{\M}\justifies
\mbody \m  \I  = (\overrightarrow{\x},\te)
\endprooftree\\\\ 
		\prooftree
\CT(\I) =\inDD \I \I \Si \M\quad  \m\text{ is  not  defined  in } \overrightarrow{\M}
\justifies
\mbody \m  \I = \mbody \m{\overrightarrow{\I} }
\endprooftree\\\\
		 \mbody \m{\I_1,\ldots ,\I_n} = \mbody \m {\I_1\&\ldots \& \I_n}  =\mbody \m {\I_j}\\
		\text{ if  
		$\mbody \m{\I_\ell}$ defined  
		implies }\I_j\st\I_\ell
		\\ \\
		 \mbody \m {\C\&\I_1\&\ldots \& \I_n}  = \begin{cases}
 \mbody \m \C     & \text{if defined }\\
   \mbody \m{\I_1,\ldots ,\I_n}    & \text{otherwise}
\end{cases}
		\end{array}$\caption{Method Body Lookup}\label{mbl}
		\end{figure}

 \bigskip

		The reduction of a method call on a $\lambda$-expression distinguishes the case of abstract methods from that of default methods, see Figure~\ref{ncr}. These rules replace rules [E-Invk$\lambda$U] and [E-Invk$\lambda$T] of Figure~\ref{comr}.
		\begin{figure}[tb]
$\begin{array}{c}		\prooftree
\Amtype \m \fInt =\overrightarrow{\T} \rightarrow \T\
\justifies
\met{\ttyy{\overrightarrow{\y}\to \te} \fInt}\m  \va  \red
 \multisubst{\vec \y\mapsto \tty{\vec \va}{\vec \T}}\tty \te \T
                \using \text{[E-Invk$\lambda$U-A]}
\endprooftree\\ \\
\prooftree
\Amtype \m \fInt =\overrightarrow{\T} \rightarrow \T
\justifies
\met{\ttyy{\overrightarrow{\T}\overrightarrow{\y}\to \te} \fInt}\m  \va   \red
 \multisubst{\vec \y\mapsto \tty{\vec \va}{\vec \T}}\tty \te \T
           \using \text{[E-Invk$\lambda$T-A]}
\endprooftree\\ \\
\prooftree
\mbody \m \fInt  = (\overrightarrow \x, \te)\quad\Dmtype \m \fInt =\overrightarrow{\T} \rightarrow \T
\justifies
\met{\ttyy{\te_\lambda} \fInt}\m \va  \red
 \multisubst{\vec \x\mapsto  \tty {\vec \va} {\overrightarrow{\T}}, \this \mapsto \ttyy{\te_\lambda} \fInt}\tty \te {\T}
           \using \text{[E-Invk$\lambda$-D]}
\endprooftree
\end{array}$\caption{New Computational  Rules}\label{ncr}
 \end{figure}
 
 The typing rules for $\lambda$-expressions must use $\texttt{A-mh}$ instead of $\texttt{sign}$:
 \[\begin{array}{c}\myruleN{\Asign\fInt=\T \m(\overrightarrow{\T} \overrightarrow{\x})\quad\derS{\Gamma,\overrightarrow{\y}:\overrightarrow{\T}} \te\T}{\der\Gamma {\ttyy{\overrightarrow{\y}\to \te} \fInt}\fInt}{T-$\lambda$UD}\\
 \\
	\myruleN{\Asign\fInt=\T \m(\overrightarrow{\T} \overrightarrow{\x})\quad\derS{\Gamma,\overrightarrow{\y}:\overrightarrow{\T}}\te \T	}{\der\Gamma{\ttyy{\overrightarrow{\T}\overrightarrow{\y}\to \te} \fInt}\fInt}{T-$\lambda$TD}\end{array}\]
 
The well-formedness condition of interfaces is as expected:				 
\[
 \myruleN{\overline{\M}  \myit{OK} \text{ in }  \I\quad\sign \I
 }{\inDD \I \I \Si \M \myit{OK}}{$\I$  \myit{OK}}\]
where
\[\myruleN{
	\begin{array}[b]{c}
	\derS{\overrightarrow{\x} : \overrightarrow{\T},\this:\I } \te \T \quad
	\T  \m(\overrightarrow{\T}  \overrightarrow{\x})\in \Dsign \I
	\end{array}
}{\T  \m(\overrightarrow{\T}  \overrightarrow{\x}) \{ \ret\,  \te; \}  \myit{OK} \text{ in }  \I }{$\M$  \myit{OK} \text{ in } \I}\]
Notice that $\this$ is typed by an interface, see page 480 of~\cite{java8}. 

\bigskip

For example, if we modify the class table of Figure~\ref{sct} by defining\[\sf{interface \,\Hi\,\set{\Obj\,\m(\,)\,\set{\ret\,new\,\Obj\,(\,);}}}\] we can type $(\I\&\Hi)(\epsilon\to\sf{new}\,\C\,(\,))$ by $\I\&\Hi$. To call the method $\m$ defined in $\Hi$ we can use $(\I\&\Hi)(\epsilon\to\sf{new}\,\C\,(\,))$ as receiver. Notice that we cannot use $(\Hi)(\epsilon\to\sf{new}\,\C\,(\,))$ as receiver since this term has no type. In fact $\Hi$ is not a functional type. Notice that to run this example in Java one need to add the keyword $\sf{default}$ in front of the declaration of method $\m$.

\bigskip

The subject reduction proof smoothly extends by replacing Lemma~\ref{tb} by the following lemma, which takes into account default methods in interfaces.

\begin{lem}\label{tbd}
  If {\em $ \mtype \m \Int = \overrightarrow{\T} \rightarrow \T $} and {\em $ \mbody \m \Int = (\overrightarrow{\x}, \te )$}, then
  \begin{align*}
  \derS{\overrightarrow{\x}:\overrightarrow{\T}, \this:\U} \te\T
  \end{align*}
  for some $\U$ such that $\Int\st\U$. 
\end{lem}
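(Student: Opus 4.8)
The plan is to trace the definition of $\texttt{mbody}$ given in Figure~\ref{mbl} to the unique type $\U$ — a class or an interface — whose declaration of $\m$ actually supplies the body $(\overrightarrow{\x},\te)$, and then to invoke the well-formedness of that declaration. First I would argue, by induction following the clauses defining $\texttt{mbody}$, that $\Int\st\U$. Each resolution step passes to a supertype: from a class $\C$ to its superclass $\D$ (with $\C\st\D$) or to one of its implemented interfaces $\I_j$ (with $\C\st\I_j$); from an interface to an extended interface; for a list or intersection $\I_1\&\ldots\&\I_n$ to the chosen smallest component $\I_j$, where $\I_1\&\ldots\&\I_n\st\I_j$ by rule [$\st\&$L]; and for $\C\&\iInt$ either to $\C$ (by [$\st\&$L]) or to the interface part. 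In every clause the argument of the recursive call is a supertype of the current type, so transitivity of $\st$ yields $\Int\st\U$ for the declaring type $\U$ reached at the base case.

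Once $\U$ is identified, its declaration of $\m$ has been checked well formed: if $\U$ is a class this is rule [$\M$ \myit{OK} in $\C$], and if $\U$ is an interface (so that $\m$ is a default method) this is the new rule [$\M$ \myit{OK} in $\I$]. Either rule yields $\derS{\overrightarrow{\x}:\overrightarrow{\T'},\this:\U}{\te}{\T'}$, where $\overrightarrow{\T'}\rightarrow\T'$ is the signature declared for $\m$ in $\U$. Crucially, both rules type $\this$ by the declaring type itself, which is precisely the $\this$-binding required by the statement.

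It remains to match this declared signature with $\mtype\m\Int=\overrightarrow{\T}\rightarrow\T$. Since $\m$ is declared in $\U$, its signature lies in $\sign\U$ (via $\Dsign\U$ when $\U$ is an interface), so $\mtype\m\U=\overrightarrow{\T'}\rightarrow\T'$; from $\Int\st\U$ together with Lemma~\ref{ht}(\ref{ht2}) — which still holds once $\signN$ is read in the default-method sense, as $\texttt{mtype}$ is defined through it — we obtain $\mtype\m\Int=\mtype\m\U$, whence $\overrightarrow{\T'}=\overrightarrow{\T}$ and $\T'=\T$. Substituting these equalities into the body judgement gives exactly $\derS{\overrightarrow{\x}:\overrightarrow{\T},\this:\U}{\te}{\T}$ with $\Int\st\U$, as required.

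I expect the main obstacle to be the bookkeeping in the intersection and list clauses of $\texttt{mbody}$: there one must verify that the \emph{smallest} interface actually chosen is a supertype of the whole intersection, and that the no-conflict side conditions guaranteeing uniqueness of $\Dsign$ make both the selected body and its signature unambiguous, so that the type-matching step via Lemma~\ref{ht}(\ref{ht2}) is sound. The single-class and single-interface cases are the direct analogues of Lemma~\ref{tb}, so the genuinely new work is confined to establishing $\Int\st\U$ uniformly across all the structural clauses.
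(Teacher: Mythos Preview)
Your proposal is correct and follows exactly the line the paper intends: the paper does not actually prove Lemma~\ref{tbd} but only states it, remarking that the subject-reduction argument ``smoothly extends'' by replacing Lemma~\ref{tb} with it; the two-line proof given for Lemma~\ref{tb} (trace $\texttt{mbody}$ to the declaring class and invoke [$\M$~\myit{OK} in~$\C$]) is precisely what you have spelled out in full for the richer $\texttt{mbody}$ of Figure~\ref{mbl}, now also invoking [$\M$~\myit{OK} in~$\I$] at the interface base case. Your use of Lemma~\ref{ht}(\ref{ht2}) to align the signature at $\U$ with $\mtype\m\Int$ is the right glue, and your observation that every recursive clause of $\texttt{mbody}$ moves to a supertype (via [$\st\&$L] for intersections and lists) is exactly the bookkeeping the paper elides.
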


The evaluation contexts and the stuck terms are unchanged. 

\section{Conditional}\label{cond}

To consider conditional expressions we add the primitive type $\Bool$ to the set of types,  the boolean literals $\true$, $\false$ to the set of proper values and $\cond \te {\te_1} {\te_2}$ to the set of terms. Notice that $\Bool$ cannot be argument of an intersection, since the function $\signN$ for $\Bool$ is undefined\footnote{The addition of $\Bool$ instead of $\sf Boolean$ is simpler in many respects. We avoid to consider the fields and methods of $\sf Boolean$. By definition $\sf Boolean$ could occur in intersections, while $\Bool$ cannot. Being $\sf Boolean$ a final class we cannot instantiate $\C,\D$ in rule [T-UDCAST] by $\sf Boolean$, since for example $(\I)\,\true$ does not compile for any interface $\I$.}.

The reduction rules for conditionals are as expected:
\[\cond \true {\te_1} {\te_2}\red\te_1~\text{[E-IfTrue]}\qquad\cond \false {\te_1} {\te_2}\red\te_2~\text{[E-IfFalse]}\qquad
\prooftree{\te\red\te'}\justifies{\cond \te {\te_1} {\te_2}\red\cond {\te'} {\te_1} {\te_2}}\using\text{[E-If]}\endprooftree
\]

Intersection types are especially meaningful for the typing of conditional expressions, see page 587 of~\cite{java8}\footnote{In previous versions of Java the two branches were required to have types related by $<:$, see page 531 of~\cite{P02}.}. The key observation is that the term $\cond \te {\te_1} {\te_2}$ can reduce to either $\te_1$ or $\te_2$, therefore we can assure on the resulting term only what  $\te_1$ and $\te_2$ share. Let $\C$ be the minimal common superclass and $\I_1,\ldots,\I_n$ be the minimal common super-interfaces of $\Int_1$, $\Int_2$. Formally we require: \begin{itemize}
\item $\Int_1<:\C$ and $\Int_2<:\C$;
\item $\Int_1<:\D$ and $\Int_2<:\D$ imply $\C<:\D$;
\item $\Int_1<:\I_i$ and $\Int_2<:\I_i$ for $1\leq i\leq n$;
\item $\I_i\not<:\I_j$ and $\I_j\not<:\I_i$ for $1\leq i\not=j\leq n$ ;
\item $\Int_1<:\Hi$ and $\Int_2<:\Hi$ imply $\I_i<:\Hi$ for some $i \,(1\leq i\leq n)$.
\end{itemize}
If $\te$ has type $\Bool$, and $\te_1$ and $\te_2$ have types $\Int_1$, $\Int_2$, respectively, then Java derives type $\C\&\I_1\&\ldots\&\I_n$ for $\cond \te {\te_1} {\te_2}$. By defining $\join{\Int_1}{\Int_2}=\C\&\I_1\&\ldots\&\I_n$, this observation leads us to formulate the following typing rule for conditional expressions:
\[\myruleN{
	\der\Gamma  \te \Bool \quad \der\Gamma {\te_1}{\Int_1}  \quad \der\Gamma {\te_2}{\Int_2}
		}{\der\Gamma {\cond \te {\te_1} {\te_2}}  {\join{\Int_1}{\Int_2} }}{T-COND}\]
		For example, if we extend the class table of Figure~\ref{sct} with the
		declarations \[\sf{class\,\A\,extends\,\C\,\set{\cdots}}\ \ \sf{class\,\B\,extends\,\A\,implements\,\I\,\set{\cdots}}\ \ \sf{class\,\D\,extends\,\C\,implements\,\I\,\set{\cdots}}\]
		we get $\join\B\D=\C\&\I$.
		
We can easily check that $\sign{\join{\Int_1}{\Int_2}}$ is always defined. In fact by construction $\Int_i\st\sign{\join{\Int_1}{\Int_2}}$ and this implies $\sign{\join{\Int_1}{\Int_2}}\subseteq\sign{\Int_i}$ for $i=1,2$, see the proof of Lemma~\ref{ht}(\ref{ht2}).

We notice that our definition of $\sf{lub}$ is much simpler that the one in~\cite{java8} (pages 73-74-75), since $\fjil$ does not have generic types.
		
Rule [T-COND] clearly does not apply when one of the two branches of the
conditional is a $\lambda$-expression, or when one of the two branches of the conditional is in turn a conditional with a branch which is a $\lambda$-expression, and so on. In these cases, Java types the conditional only if it is has a target type. According to~\cite{java8} (page 587): ``A reference conditional expression is a poly expression if it appears in an assignment context or an invocation context.'' We do not strictly follow this requirement: we consider target types of conditionals only for the $\lambda$-expressions which appear in their branches. Clearly this does not modify the typability of terms. To render this typing we extend the mapping $\tty{\,}{\Int}$ to  conditional expressions by applying it to conditional branches:\\
\centerline{$\tty {\cond \te {\te_1} {\te_2}}\Int =
    \cond \te {\tty{\te_1}\Int }{\tty{\te_2}\Int}
$}

This assures that rule [T-COND] can be applied to conditional expressions having $\lambda$-expressions as branches.	In this way we formalise the sentence ``\ldots a conditional expression appears in a context of a particular kind with target type $\fInt$, its second and third operand expressions similarly appear in a context of the same kind with target type $\fInt$'', see page 587 of~\cite{java8}.

For example, using the class table of Figure~\ref{sct} and the above declaration of class $\B$ under the assumption that class $\B$ has no field, we can derive:

{\footnotesize{\[\prooftree
\prooftree
\fields\C=\epsilon
\justifies
\der{}{\sf{new}\,\C(\,)}\C
\endprooftree
\ \ \mtype\m\C=\I\to\C
\prooftree
\prooftree
\der{}\true\Bool
\prooftree
\sign\I= \C\,\sf{n}(\,)
\prooftree
\fields\C=\epsilon
\justifies
\der{}{\sf{new}\,\C(\,)}\C
\endprooftree
\justifies
\der{}{\ttyy{\epsilon\to\, \sf{new}\,\C(\,)}\I}\I
\endprooftree
\prooftree
\fields\B=\epsilon
\justifies
\der{}{\sf{new}\,\B(\,)}\B
\endprooftree
\justifies
\der{}{\sf{\cond\true{\ttyy{\epsilon\to\,new\,\C(\,)}\I}{new\,\B(\,)}}}\I
\endprooftree
\justifies
\derS{}{\sf{\cond\true{\epsilon\to\,new\,\C(\,)}{new\,\B(\,)}}}\I
\endprooftree
\justifies
\der{}{\sf{new\,\C(\,).m(\cond\true{\epsilon\to\,new\,\C(\,)}{new\,\B(\,)})}}\C
\endprooftree\]}}

\noindent
being $\join\I\B=\I$.

\bigskip
	
	The proof of subject reduction can be easily extended, since we can show:	
	\begin{lem}\label{keyc}
If $\derS\Gamma {\cond \te {\te_1} {\te_2}}\Int$, then $\derS\Gamma {\tty {\cond \te {\te_1} {\te_2}} \Int}\Int$. 
\end{lem}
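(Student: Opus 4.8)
The plan is to reduce the statement to the idempotence of the decoration map $\tty{\,}{\Int}$ and then simply re-apply the definition of the starred judgement. First I would unfold the hypothesis. Since a conditional is not a pure $\lambda$-expression, rule [$\vdash$ $\vdash^*$] (equivalently, Lemma~\ref{key} applied to $\derS\Gamma {\cond \te {\te_1} {\te_2}}\Int$) yields $\der\Gamma {\tty{\cond \te {\te_1} {\te_2}}{\Int}}{\IntP}$ for some $\IntP\st\Int$, where, by the clause extending $\tty{\,}{\Int}$ to conditionals, the decorated term is exactly $\cond \te {\tty{\te_1}\Int }{\tty{\te_2}\Int}$. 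Abbreviating this term by $\te^{\sharp}$, the goal $\derS\Gamma {\tty {\cond \te {\te_1} {\te_2}} \Int}\Int$ is precisely $\derS\Gamma {\te^{\sharp}}\Int$, and by [$\vdash$ $\vdash^*$] it suffices to produce a derivation of $\der\Gamma {\tty{\te^{\sharp}}{\Int}}{\IntQ}$ for some $\IntQ\st\Int$.

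The central auxiliary fact I would isolate is that $\tty{\,}{\Int}$ is idempotent, i.e.\ $\tty{\tty{\te}{\Int}}{\Int}=\tty{\te}{\Int}$ for every term $\te$, proved by a short case analysis (with an induction in the conditional case). If $\te$ is a pure $\lambda$-expression, then $\tty{\te}{\Int}=\ttyy{\te}{\Int}$ is a decorated $\lambda$-expression, which is neither a pure $\lambda$-expression nor a conditional, so a second application of $\tty{\,}{\Int}$ leaves it unchanged. If $\te$ is neither a pure $\lambda$-expression nor a conditional, then $\tty{\te}{\Int}=\te$ and the same reasoning applies again. Finally, if $\te=\cond {\te'} {\te_1} {\te_2}$, then $\tty{\te}{\Int}=\cond {\te'} {\tty{\te_1}\Int }{\tty{\te_2}\Int}$, and idempotence follows from the induction hypothesis applied to the two branches.

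With idempotence in hand the conclusion is immediate: taking $\IntQ=\IntP$, the fact $\tty{\te^{\sharp}}{\Int}=\te^{\sharp}$ turns the derivation $\der\Gamma {\te^{\sharp}}{\IntP}$ obtained above into $\der\Gamma {\tty{\te^{\sharp}}{\Int}}{\IntP}$ with $\IntP\st\Int$, and rule [$\vdash$ $\vdash^*$] then delivers $\derS\Gamma {\te^{\sharp}}\Int$, as required. I expect the only genuinely delicate point to be the idempotence fact: one must check that the newly added conditional clause of $\tty{\,}{\Int}$ interacts correctly with the two-case clause for $\lambda$-expressions, in particular that re-decorating an already-decorated branch is harmless; once this is verified, the rest is a direct rewriting of the $\vdash^*$ definition.
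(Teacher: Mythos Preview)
Your argument is correct. The paper actually states Lemma~\ref{keyc} without proof (it only remarks that subject reduction extends ``since we can show'' this lemma), so there is no original proof to compare against; your idempotence-based derivation is exactly the kind of routine verification the authors are gesturing at, and every step checks out against the definitions of $\tty{\,}{\Int}$ and rule [$\vdash$ $\vdash^*$].
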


In characterising stuck terms we need to add the evaluation context for conditionals: $\cond\E{\te_1}{\te_2}$.  
\section{Type Inference}\label{inf}
Our type inference algorithm naturally uses the technique of {\em bidirectional checking}~\cite{PT00,DP00}. In fact the judgement $\vdash$ operates in synthesis mode, propagating typing upward from subexpressions, while the judgement $\vdash^*$ operates in checking mode, propagating typing downward from enclosing expressions.

We assume a given class table to compute the lookup functions and the subtyping relation.
The partial function $\tInf\Gamma \te$ gives (if any) the 
type $\Int$ such that $\der\Gamma \te \Int$. It is always undefined for pure $\lambda$-expressions. It uses the predicate $\tCk\Gamma \te \Int$ which is true if $\derS\Gamma \te \Int$, i.e., according to rule [$\vdash$ $\vdash^*$] (see page~\pageref{star}):
\[\tCk\Gamma \te \Int \text{ if }\tInf\Gamma {\tty \te\Int}=\IntP \text{ and } \IntP\st\Int\]
Figure~\ref{inff} defines $\texttt{tInf}$: it just uses the rules of Figure~\ref{sdtr} without the rules for $\lambda$-expressions,  the rules of Figure~\ref{ctr}, the typing rules for $\lambda$-expressions of Section~\ref{dm} and the typing rules for the conditionals of Section~\ref{cond}.

We use $\tCk \Gamma { \overrightarrow \te}{ \overrightarrow \T}$ as short for $\tCk \Gamma { \te_1}{ \T_1},\ldots,\tCk \Gamma { \te_n}{ \T_n}$.

\begin{figure}[tb]
$\begin{array}{llll}
\tInf\Gamma \x & = & \T &\text{if } \x:\T\in\Gamma\\
\tInf\Gamma \true & = & \Bool \\
\tInf\Gamma \false & = & \Bool \\
\tInf\Gamma {\te.\f} & = & \T &\text{if } \tInf\Gamma \te=\C[\&\iInt]\text{ and } \T\f\in\fields \C\\
\tInf\Gamma {\new \C \te} & = & \C &\text{if }\fields \C=\overrightarrow{\T}\overrightarrow{\f} \text{ and } \tCk \Gamma { \overrightarrow \te}{ \overrightarrow \T}\\
\tInf\Gamma {\met  \te \m \te} & = & \T &\text{if }\tInf\Gamma \te=\Int\text{ and } \mtype  \m \Int=\overrightarrow{\T}\to \T \\&&&\text{and } \tCk \Gamma { \overrightarrow \te}{ \overrightarrow \T}\\
    \tInf\Gamma{\cast\Int \te} & = & \Int & \text{if  one of the following conditions holds}\\
&&&\bullet\quad\tCk\Gamma \te \Int\\
 &&&\bullet\quad\Int=\C[\&\iInt] \text{ and } \tInf\Gamma \te=\D[\&\iInt']\\
 &&&\phantom{\bullet\quad} \text{and either $\C\st \D$ or $\D\st \C$}\\
     \tInf\Gamma{\ttyy{\lambdaU \y \te}\fInt} & = &\fInt& \text{if }\Asign\fInt=\sm \T \m \T \x\text{ and }\tCk{\Gamma,\overrightarrow{\y}:\overrightarrow{\T}} \te \T\\
        \tInf\Gamma{\ttyy{\lambdaT \y \T \te}\fInt} & = &\fInt& \text{if }\Asign\fInt=\sm \T \m \T \x\text{ and }\tCk{\Gamma,\overrightarrow{\y}:\overrightarrow{\T}} \te \T\\
         \tInf\Gamma{\cond \te {\te_1} {\te_2}} & = & \Int & \text{if }\tCk\Gamma  \te  \Bool\text{ and }        \tInf \Gamma  {\te_1 } ={\Int_1} \\
         &&&  \text{and } \tInf\Gamma { \te_2}= {\Int_2}\text{ and }\Int=\join{\Int_1}{ \Int_2}\\
 \end{array}$
 \caption{Type Inference Function}\label{inff}
 \end{figure}

Building on Figure~\ref{icdtr} and the well-formedness rules for interfaces and
their methods of Section~\ref{dm}, Figure~\ref{wct} defines a predicate $\OK$ which  tests well-formedness of class tables, i.e., of classes, interfaces and methods. We use the following abbreviations: \dfn\  for defined, $\OK(\overrightarrow{\M}, \T)$ for $\OK(\M_1, \T),\ldots,\OK(\M_n, \T)$, and $\OK(\overrightarrow{\M})$ for $\OK(\M_1),\ldots,\OK(\M_n)$.

\begin{figure}[tb]
$\begin{array}{ll}
\OK(\T  \m(\overrightarrow{\T}  \overrightarrow{\x}) \{ \ret\,  \te; \} ,\U) &\text{if }\T  \m(\overrightarrow{\T}  \overrightarrow{\x})\in \sign \U\text { and}\\ &\tCk{\Gamma,\overrightarrow{\x}:\overrightarrow{\T},  \this : \U} \te \T\\
\OK(\clD \C  \D \I )&\text{if }\K = \C(\overrightarrow{\U}  \overrightarrow{\g},  \overrightarrow{\T}  \overrightarrow{\f}) \{ \super(\overrightarrow{\g});  \this.\overline{\f} = \overline{\f}; \} \\
	&\text{and }\fields \D = \overrightarrow{\U}  \overrightarrow{\g} \text{ and } \OK(\overrightarrow{\M}, \C) \text{ and}\\&
	\sign \C \text{ \dfn\ and for any }\m\\&
	\mtype \m \C  \text{ \dfn\ implies }\mbody \m \C \text{ \dfn}\\
\OK(\inDD \I \I \Si \M)&\text{if }	\OK(\overrightarrow{\M}, \I) \text{ and }\sign \I \text{ \dfn}\\
\OK(\overline\C\,\overline\I)&\text{if }	\OK(\overrightarrow{\C}) \text{ and }\OK(\overrightarrow{\I})
\end{array}$
\caption{Well-formedness Function}\label{wct}
\end{figure} 

\bigskip

For example, if we use the class table of Figure~\ref{sct} and we apply the inference function to the empty environment and to the term $\sf{new\,\C(\,).m(\epsilon\to\,new\,\C(\,))}$ we get $\tInf{}{\sf{new\,\C(\,)}}=\C$ and $\mtype\m\C=\I\to\C$. This requires $\tCk{}{\sf{\epsilon\to\,new\,\C(\,)}}\I$, which means\\ \centerline{$\tInf{}{\ttyy{\sf{\epsilon\to\,new\,\C(\,)}}\I}=\Int$ for some $\Int\st\I$.} Being $\Asign\I= \C\,\sf{n}(\,)$ and $\tInf{}{\sf{new\,\C(\,)}}=\C$ we derive $\tInf{}{\ttyy{\sf{\epsilon\to\,new\,\C(\,)}}\I}=\I$. We can then conclude $\tInf{}{\sf{new\,\C(\,).m(\epsilon\to\,new\,\C(\,))}}=\C$. Clearly this computation corresponds to the derivation shown at the end of Section~\ref{tr}.

 \section{Related Work}\label{rw}

The literature on object-oriented programming, in particular the literature on Java, is enormous. We only mention here some papers that, 
like the present one, introduce core calculi in order to enlighten relevant aspects of the object-oriented paradigm.

The seminal paper of Fisher, Honsell, and Mitchell~\cite{FHM94} presents one of
the first typed calculi modelling a fully-fledged \emph{object-based} language, and distilling ten years of studies on objects-as-records. Abadi and Cardelli in their encyclopaedic book~\cite{AC96} discuss foundational calculi of objects: the untyped calculus and the calculi with first-order, second-order and higher-order types. Extensions of the calculi in~\cite{AC96} have been used to formalise object behaviours. For example, object ownership and nesting between objects are the subject of~\cite{CNP01}. Castagna~\cite{Castagna97} provides a foundation for object-oriented languages focusing on overloading and multiple dispatch.

A formal description of the operational semantics and type system of a substantial subset of {\em Java} is the content of~\cite{DEK99}. The approach taken by Igarashi, Pierce and Wadler~\cite{IPW01} is instead to omit many features of Java obtaining an elegant and  small calculus, i.e., FJ,
suitable for extensions and variations. Today we can safely claim that 
this goal has been fully achieved. By using FJ,  generic classes are formalised in~\cite{IPW01}, a true module system is constructed in~\cite{AZ01}, inner classes are modelled in~\cite{IP02}, the existence of principal typings is shown in~\cite{AZ04}, transactional mechanisms are discussed in~\cite{JVWH05}, union types are proposed in~\cite{Igarashi07a}, cyclic objects with coinductive operations are introduced in~\cite{AZ12}, a co-contextual type checker is described in~\cite{KEBBM17}. The authors themselves have widely used FJ to formalise  extensions of Java with additional features, aiming at dynamic flexibility~\cite{WrapJavaJot,BCV08,BBV2010}, at assuring safety of communications~\cite{DDMY09,BCDGV13} and at enhancing code reuse under several aspects~\cite{BETTINI2013907, BETTINI2017419}.
%
FJ has been shown to be suitable also in dealing with semantics. We mention the denotational semantics in a theory of types and names~\cite{S01}, the type-preserving compilation into an intermediate language~\cite{LST02}, the coinductive big-step operational semantics~\cite{A12}, the semantics based on intersection types and approximants~\cite{RB14}. 

The benefits of {\em intersection types} to model multiple inheritance in
class-based languages were already shown by Compagnoni and Pierce
in~\cite{CP96}. B\"uchi and Weck~\cite{BW98} introduce the notion of compound types as anonymous reference types, expressed as a list of a class and various interfaces, so that objects having these types can combine  the behavioural specifications of several nominal types. They illustrate a rather interesting scenario which motivates the need of extending Java 1 with compound types. Two alternative ways for emulating compound types on the Java virtual machine are discussed. Furthermore,  the soundness of the proposal is verified with the theorem prover Isabelle/HOL.
In~\cite{BL08} an intersection type assignment system provides a program logic for the first order calculus of~\cite{AC96}. Intersection types are also  employed to synthesise mixins, which permit reuse of object-oriented code avoiding the ambiguities of
multiple inheritance~\cite{BDDCL15}. 

Differently from the Java approach, in~\cite{P11} a minimal core Java  is extended to {\em $\lambda$-expressions} by adding function types, following the style of functional languages. The corresponding type inference algorithm uses sets of constraints and type assumptions, then a substitution operation is required for type variables similar to standard unification. Thus complexity of type inference increases in a substantial way, with respect to Java's one and to the type inference in our calculus. Furthermore, no formal proof of type-safety is provided for this language.

We observe that adding real function types entails that a method must have a different signature according to whether it can accept an object or a function. This sharply contrasts with Java philosophy to continuously fuse language innovations into the old  layer.

Empirical methodologies are used in~\cite{MKTD17} to illustrate when, how and why imperative programmers adopt $\lambda$-expressions.

 \section{Conclusion and Future Work}\label{fw}
We presented the core calculus $\fjil$, which extends a minimal standard model of Java with $\lambda$-expressions and intersection types. Our main intent was to provide a deeper understanding and a formal account for the novel features of Java 8, in order to state and prove related formal properties. A crucial issue has been to design a type system modelling and unifying standard typechecking of object-oriented expressions and type inference for $\lambda$-expressions. Moreover, specific challenges arose to cope with intersection types. 
 As a result, we proved the subject reduction property and progress for  $\fjil$. Since $\fjil$ programs are typed and behave the same as Java programs, our formal result demonstrates that those significant novelties are interwoven in Java 8 in a \emph{type-safe} way. As a by-product of our analysis, we introduced the subtyping rule [$\st\&$L], that, at the best of our knowledge, was never considered before in the present setting, while it is standard in the theory of intersection types (see Part III of~\cite{BDS13}).

Furthermore, we observe that generic functional interfaces are largely used as target types of $\lambda$-expressions, typically the interfaces ${\sf Function <\T,\U>}$ and ${\sf Predicate <\T>} $. The extension of $\fjil$ to generic types poses a significant challenge, since some problems arise from the Java semantics. For instance, a more complicated notion of functional type would be needed to cope with the intersection of generic types, taking into account that method signatures are modified by erasure. Also the definition of the function $\texttt{lub}$ for typing the conditionals would become trickier, as observed in Section~\ref{cond}. Therefore we leave the study of \emph{generic} $\fjil$, based on the core calculus GJ of ~\cite{IPW01}, as future work. 
 
Concluding,  the main takeaway of our formalisation is that we could extend the syntax of $\fjil$ to additional cases 
that allow valid uses of explicit intersection types, while keeping the type checking  straightforward  as in FJ and in $\fjil$.
For instance, it would be interesting in our formal calculus to allow methods to have intersections as formal parameter types, as already proposed in~\cite{BW98} for Java without $\lambda$-expressions.  This would be a sensible feature, since it increases polymorphism 
in method calls, both on objects and on $\lambda$-expressions.
In the latter case, in particular, if a method could have as a formal parameter type an intersection of interfaces, we could pass to the method a $\lambda$-expression, on which we can call default methods belonging to different interfaces.
In future works we aim at investigating extensions of $\fjil$ in this direction, towards a further type-safe evolution of Java.

\section*{Acknowledgement}

 We thank the referees whose suggestions guided us in strongly improving the paper, in particular in amending the definition of functional type and the proof of the substitution lemma.
 
 
 Mariangiola and Betti had the pleasure and privilege of frequenting Furio Honsell over many years. He has always displayed great clarity in honing onto the gist of many subjects. 
  It is however as much for his compelling enthusiasm and humor as for his deep knowledge that we are truly indebted to him. 

\bibliographystyle{alpha}
\bibliography{ref}

\newcommand{\etalchar}[1]{$^{#1}$}
\begin{thebibliography}{DCDMY09}

\bibitem[AC96]{AC96}
Martin Abadi and Luca Cardelli.
\newblock {\em {A} Theory of Objects}.
\newblock Springer, 1996.

\bibitem[Anc12]{A12}
Davide Ancona.
\newblock {Soundness of Object-Oriented Languages with Coinductive Big-Step
  Semantics}.
\newblock In James Noble, editor, {\em ECOOP}, volume 7313 of {\em LNCS}, pages
  459--483. Springer, 2012.

\bibitem[AZ01]{AZ01}
Davide Ancona and Elena Zucca.
\newblock {True Modules for Java-like Languages}.
\newblock In J{\o}rgen~Lindskov Knudsen, editor, {\em ECOOP}, volume 2072 of
  {\em LNCS}, pages 354--380. Springer, 2001.

\bibitem[AZ04]{AZ04}
Davide Ancona and Elena Zucca.
\newblock {Principal Typings for Java-like Languages}.
\newblock In Neil~D. Jones and Xavier Leroy, editors, {\em POPL}, pages
  306--317. ACM, 2004.

\bibitem[AZ12]{AZ12}
Davide Ancona and Elena Zucca.
\newblock {Corecursive Featherweight Java}.
\newblock In Wei{-}Ngan Chin and Aquinas Hobor, editors, {\em FTfJP}, pages
  3--10. {ACM}, 2012.

\bibitem[BBV11]{BBV2010}
Lorenzo Bettini, Viviana Bono, and Betti Venneri.
\newblock {Delegation by Object Composition}.
\newblock {\em Science of Computer Programming}, 76(11):992--1014, 2011.

\bibitem[BCD13]{BETTINI2013907}
Lorenzo Bettini, Sara Capecchi, and Ferruccio Damiani.
\newblock {On Flexible Dynamic Trait Replacement for Java-like Languages}.
\newblock {\em Science of Computer Programming}, 78(7):907 -- 932, 2013.

\bibitem[BCDC{\etalchar{+}}13]{BCDGV13}
Lorenzo Bettini, Sara Capecchi, Mariangiola Dezani-Ciancaglini, Elena Giachino,
  and Betti Venneri.
\newblock {Deriving Session and Union Types for Objects}.
\newblock {\em Mathematical Structures in Computer Science}, 23(6):1163--1219,
  2013.

\bibitem[BCG08]{WrapJavaJot}
Lorenzo Bettini, Sara Capecchi, and Elena Giachino.
\newblock {Featherweight Wrap Java: Wrapping Objects and Methods}.
\newblock {\em Journal of Object Technology}, 7(2):5--29, 2008.

\bibitem[BCV09]{BCV08}
Lorenzo Bettini, Sara Capecchi, and Betti Venneri.
\newblock {Featherweight Java with Dynamic and Static Overloading}.
\newblock {\em Science of Computer Programming}, 74(5-6):261--278, 2009.

\bibitem[BD17]{BETTINI2017419}
Lorenzo Bettini and Ferruccio Damiani.
\newblock {Xtraitj: Traits for the Java platform}.
\newblock {\em Journal of Systems and Software}, 131(Supplement C):419 -- 441,
  2017.

\bibitem[BDD{\etalchar{+}}15]{BDDCL15}
Jan Bessai, Andrej Dudenhefner, Boris D{\"{u}}dder, Tzu{-}Chun Chen, Ugo
  de'Liguoro, and Jakob Rehof.
\newblock {Mixin Composition Synthesis Based on Intersection Types}.
\newblock In Thorsten Altenkirch, editor, {\em TLCA}, volume~38 of {\em
  LIPIcs}, pages 76--91. Schloss Dagstuhl, 2015.

\bibitem[BDS13]{BDS13}
Henk Barendregt, Wil Dekkers, and Richard Statman.
\newblock {\em Lambda Calculus with Types}.
\newblock Perspectives in Logic. Cambridge, 2013.

\bibitem[BL08]{BL08}
Steffen~van Bakel and Ugo~de' Liguoro.
\newblock {Logical Equivalence for Subtyping Object and Recursive Types}.
\newblock {\em Theory of Computing Systems}, 42(3):306--348, 2008.

\bibitem[BW98]{BW98}
Martin B{\"{u}}chi and Wolfgang Weck.
\newblock {Compound Types for Java}.
\newblock In Bj{\o}rn~N. Freeman{-}Benson and Craig Chambers, editors, {\em
  OOPSLA}, pages 362--373. {ACM}, 1998.

\bibitem[Cas97]{Castagna97}
Giuseppe Castagna.
\newblock {\em {Object-Oriented Programming: {A} Unified Foundation}}.
\newblock Progress in Theoretical Computer Science. Birkhauser, 1997.

\bibitem[CNP01]{CNP01}
David~G. Clarke, James Noble, and John Potter.
\newblock {Simple Ownership Types for Object Containment}.
\newblock In {\em ECOOP}, volume 2072 of {\em LNCS}, pages 53--76. Springer,
  2001.

\bibitem[CP96]{CP96}
Adriana~B. Compagnoni and Benjamin~C. Pierce.
\newblock {Higher-Order Intersection Types and Multiple Inheritance}.
\newblock {\em Mathematical Structures in Computer Science}, 6(5):469--501,
  1996.

\bibitem[DCDMY09]{DDMY09}
Mariangiola Dezani-Ciancaglini, Sophia Drossopoulou, Dimitris Mostrous, and
  Nobuko Yoshida.
\newblock {Objects and Session Types}.
\newblock {\em Information and Computation}, 207(5):595--641, 2009.

\bibitem[DEK99]{DEK99}
Sophia Drossopoulou, Susan Eisenbach, and Sarfraz Khurshid.
\newblock {Is the Java Type System Sound?}
\newblock {\em ACM Transactions on Programming Languages and Systems},
  5(1):3--24, 1999.

\bibitem[DP00]{DP00}
Rowan Davies and Frank Pfenning.
\newblock {Intersection Types and Computational Effects}.
\newblock In Martin Odersky and Philip Wadler, editors, {\em ICFP}, pages
  198--208. {ACM}, 2000.

\bibitem[FHM94]{FHM94}
Kathleen Fisher, Furio Honsell, and John~C. Mitchell.
\newblock {A lambda Calculus of Objects and Method Specialization}.
\newblock {\em Nordic Journal of Computing}, 1(1):3--37, 1994.

\bibitem[GJS{\etalchar{+}}15]{java8}
James Gosling, Bill Joy, Guy~L. Steele, Gilad Bracha, and Alex Buckley.
\newblock {\em The Java Language Specification, Java SE 8 Edition}.
\newblock Oracle, 2015.

\bibitem[IN07]{Igarashi07a}
Atsushi Igarashi and Hideshi Nagira.
\newblock {Union Types for Object-Oriented Programming}.
\newblock {\em Journal of Object Technology}, 6(2):47--68, 2007.

\bibitem[IP02]{IP02}
Atsushi Igarashi and Benjamin~C. Pierce.
\newblock {On Inner Classes}.
\newblock {\em Information and Computation}, 177(1):56--89, 2002.

\bibitem[IPW01]{IPW01}
Atsushi Igarashi, Benjamin~C. Pierce, and Philip Wadler.
\newblock {Featherweight Java: A Minimal Core Calculus for Java and GJ}.
\newblock {\em ACM Transactions on Programming Languages and Systems},
  23(3):396--450, 2001.

\bibitem[JVWH05]{JVWH05}
Suresh Jagannathan, Jan Vitek, Adam Welc, and Antony Hosking.
\newblock {A Transactional Object Calculus}.
\newblock {\em Science of Computer Programming}, 57(2):164--186, 2005.

\bibitem[KEB{\etalchar{+}}17]{KEBBM17}
Edlira Kuci, Sebastian Erdweg, Oliver Bracevac, Andi Bejleri, and Mira Mezini.
\newblock {A Co-contextual Type Checker for Featherweight Java}.
\newblock In Peter M{\"{u}}ller, editor, {\em ECOOP}, volume~74 of {\em
  LIPIcs}, pages 18:1--18:26. Schloss Dagstuhl, 2017.

\bibitem[LST02]{LST02}
Christopher League, Zhong Shao, and Valery Trifonov.
\newblock {Type-preserving Compilation of Featherweight Java}.
\newblock {\em ACM Transactions on Programming Languages and Systems},
  24(2):112--152, 2002.

\bibitem[MKTD17]{MKTD17}
Davood Mazinanian, Ameya Ketkar, Nikolaos Tsantalis, and Danny Dig.
\newblock {Understanding the Use of Lambda Expressions in Java}.
\newblock {\em {Proceedings of the ACM on Programming Languages}},
  1({OOPSLA}):85:1--85:31, 2017.

\bibitem[Pie02]{P02}
Benjamin~C. Pierce.
\newblock {\em Types and Programming Languages}.
\newblock MIT Press, 2002.

\bibitem[Pl{\"{u}}11]{P11}
Martin Pl{\"{u}}micke.
\newblock {Well-typings for Java\({}_{\mbox{{\(\lambda\)}}}\)}.
\newblock In Christian~W. Probst and Christian Wimmer, editors, {\em PPPJ},
  pages 91--100. {ACM}, 2011.

\bibitem[PT00]{PT00}
Benjamin~C. Pierce and David~N. Turner.
\newblock {Local Type Inference}.
\newblock {\em {ACM} Transactions on Programming Languages and Systems},
  22(1):1--44, 2000.

\bibitem[RB14]{RB14}
Reuben~N.S. Rowe and Steffen~van Bakel.
\newblock {Semantic Types and Approximation for Featherweight Java}.
\newblock {\em Theoretical Computer Science}, 517(Supplement C):34 -- 74, 2014.

\bibitem[Stu01]{S01}
Thomas Studer.
\newblock {Constructive Foundations for Featherweight Java}.
\newblock In Reinhard Kahle, Peter Schroeder-Heister, and Robert St{\"a}rk,
  editors, {\em Proof Theory in Computer Science}, volume 2183 of {\em LNCS},
  pages 202--238. Springer, 2001.

\end{thebibliography}
\end{document}